\documentclass[11pt,fleqn]{article}
\usepackage{fullpage}

\usepackage{amsmath}
\usepackage{amsfonts}
\usepackage{amssymb}
\usepackage{amsthm}

\newcommand{\ddim}{\operatorname{ddim}}
\newcommand{\diam}{\operatorname{diam}}
\newcommand{\rad}{\operatorname{rad}}
\newcommand{\MST}{\operatorname{MST}}
\newcommand{\etal}{{et al.\ }}
\newcommand{\eps}{\varepsilon}

\newtheorem{theorem}{Theorem}[section]
\newtheorem{lemma}[theorem]{Lemma}

\title{A light metric spanner}

\author{Lee-Ad Gottlieb
\\Department of Computer Science and Mathematics
\\Ariel University%
\thanks{Ariel, Israel. Email: \texttt{leead@ariel.ac.il}}
}

\begin{document}
\maketitle

\begin{abstract}
It has long been known that $d$-dimensional Euclidean point sets admit
$(1+\eps)$-stretch spanners with lightness 
$W_E = \eps^{-O(d)}$, 
that is total edge weight at most $W_E$ times the weight of the 
minimum spaning tree of the set \cite{DHN93}. 
Whether or not a similar result holds for metric spaces with low doubling dimension
has remained an important open problem, and has resisted numerous
attempts at resolution. 
In this paper, we resolve the question in the affirmative, and show
that doubling spaces admit 
$(1+\eps)$-stretch spanners with lightness 
$W_D = (\ddim/\eps)^{O(\ddim)}$.

Important in its own right, our result also implies a much faster 
polynomial-time approximation scheme for the traveling salesman problem
in doubling metric spaces,
improving upon the bound presented in \cite{BGK-12}.
\end{abstract}

\setcounter{page}{0}
\thispagestyle{empty}
\newpage

\section{Introduction}
Let $G=(V_G,E_G)$ be a metric graph, where vertices $V_G$ represent points of 
some metric set $S$, while the edge weights of $E_G$ correspond to inter-point 
distances in $S$.
A graph $R = (V_R,E_R)$ is a $(1+\eps)$-stretch {\em spanner} of graph 
$G = (V_G,E_G)$ if $R$ is a subgraph of $G$
(specifically, $V_R = V_G$ and $E_R \subset E_G$), and also
$d_R(u,v) \le (1+\eps) d_G(u,v)$ 
for all $u,v \in G$. 
Here, $d_G(u,v)$ and $d_R(u,v)$ denote the shortest path
distance between $u$ and $v$ in the graphs $G$ and $R$, respectively.

Low-stretch spanners have been the subject of intensive and broad-range
study over the past three decades. Research has focused on minimizing such
properties as construction time, vertex degree, graph diameter, and total edge weight
-- as well as possible trade-offs between these quantities -- 
in various settings such as planar graphs or Euclidean spaces 
\cite{Vaidya91, Salowe91, Soares94, AMS94, CK-95, AS97, DN-97, ArMoSm99, 
ADMSS95, DNS-95, GLN-02, BoGuMo04, DES08, ES13, DHN93, Sol11}.
Of particular interest is a remarkable result of the nineties, that $d$-dimensional 
Euclidean spaces admit $(1+\epsilon)$-stretch spanners with {\em lightness} 
$W_E = \epsilon^{-O(d)}$, meaning that the total spanner weight is at most
a factor $W_E$ times the weight of the minimum spanning tree ($\MST$) of the set
\cite{DHN93}.

The work of Gao \etal \cite{GGN-06} was the first to consider low-stretch
spanners in metric spaces of low doubling dimension -- a strictly more general setting
that Euclidean space. This immediately spawned a long and fruitful line of work,
showing that results comparable to those of Euclidean space can be obtained for
these spaces as well, for construction time, degree and diameter.
\cite{CG-06, GR08a, GR08b, S-09, GKK13, CLNS15, Sol14}.
However, the proof of lightness for Euclidean spanners relied heavily on
the properties specific to that space (in particular, the {\em leapfrog property}) 
and so its analysis does not carry over to doubling spaces.
In fact, the question of light low-stretch spanners for doubling metric spaces has
resisted multiple attempts at resolution, and the problem of proving (or disproving)
their existence has remained a central open
problem in the study of spanners. The best lightness bound known
for spanners in these metrics was $\Omega(\log n)$ \cite{S-09,ES13}.
Our contribution is in proving the following theorem:

\begin{theorem}\label{thm:main}
Given a graph $G$ representing a metric $S$, there exists a
$(1+\eps)$-stretch spanner $R$ of $G$ with weight 
$W_D \cdot w(\MST(G))$
where 
$W_D = (\ddim/\eps)^{O(\ddim)}$
and $0 < \eps < \frac{1}{2}$.
Spanner $R$ can be constructed in time 
$(\ddim/\eps)^{O(\ddim)} n \log^2 n$.
\end{theorem}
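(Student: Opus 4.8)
The plan is to build the spanner $R$ as the union of a \emph{backbone} and a family of \emph{net edges}. The backbone is a minimum spanning tree of $G$ (or a constant-factor approximation, whose weight is $O(w(\MST(G)))$); this already handles fine-scale connectivity and contributes exactly $w(\MST(G))$ to $W_D$. For the net edges, fix a nested net hierarchy $N_0\supseteq N_1\supseteq\cdots$, with $N_i$ a $2^i$-net of $S$, and process the levels from finest to coarsest. At level $i$ I consider every candidate pair $u,v\in N_i$ with $d(u,v)\le\gamma\,2^i/\eps$ for a suitable $\gamma=\Theta(1/\eps)$ — by the packing property each $u\in N_i$ has only $(\gamma/\eps)^{O(\ddim)}$ such partners — and I add $(u,v)$ to $R$ unless the graph built so far (backbone plus net edges of strictly finer levels) already contains a $u$–$v$ path of length at most $(1+\eps/4)\,d(u,v)$, in which case $(u,v)$ is discarded. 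This greedy pruning of the net edges is the mechanism that will let us break the $\Omega(\log n)$ lightness barrier.

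The stretch guarantee is the routine part, proved by induction on the level. One shows that in the final graph every pair $x,y\in N_i$ with $d(x,y)\le\gamma\,2^i/\eps$ is joined by a path of length at most $(1+\eps)\,d(x,y)$: if the candidate edge $(x,y)$ was kept this is immediate, and if it was pruned then the rule supplies a path of length $(1+\eps/4)\,d(x,y)$ in the graph present at that moment, whose finer net edges the inductive hypothesis upgrades to genuine $(1+\eps)$-paths. For arbitrary $x,y\in S$ one drops to the level $i$ with $2^i\approx\eps\,d(x,y)$, moves from $x$ and from $y$ to net points $x_i,y_i\in N_i$ along the net tree at cost $O(2^i)=O(\eps\,d(x,y))$, applies the above to $(x_i,y_i)$, and rescales $\eps$ by a constant to absorb the slack.

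The weight bound is the heart of the matter and the step I expect to be the main obstacle. I would bound, for each level $i$, the total weight of the net edges retained at level $i$ and sum. A retained edge has length at most $\gamma\,2^i/\eps$, and each net point of $N_i$ carries at most $(\gamma/\eps)^{O(\ddim)}$ of them, so the level-$i$ weight is at most $(\ddim/\eps)^{O(\ddim)}\,2^i$ times the number of net points of $N_i$ that retain an edge. The task is thus to charge this weight to $\MST$ edges near $u$, so that each $\MST$ edge is overcharged by only $(\ddim/\eps)^{O(\ddim)}$ in total. A sphere-crossing argument — integrating over radii $r\in[2^i,\,\Theta(2^i/\eps)]$ the number of $\MST$ edges that cross the sphere of radius $r$ about $u$, which is at least one for each $r$ — certifies that the $\MST$ has $\Omega(2^i/\eps)$ worth of edge length within distance $\Theta(2^i/\eps)$ of $u$; a second packing bound shows a fixed $\MST$ edge lies in that neighborhood for at most $(\ddim/\eps)^{O(\ddim)}$ net points of $N_i$. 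The remaining and delicate point — the one that replaces the Euclidean leapfrog property — is to keep a single $\MST$ edge from being charged at arbitrarily many scales: one must show that when $(u,v)$ is retained at level $i$, the relevant $\MST$ mass near $u$ genuinely \emph{belongs} to scale $2^i$, in the sense that it cannot be reused to certify retentions at much coarser scales. Concretely, if a coarser cross-edge through the same region could be spanned by combining already-present finer net edges (in particular the finer edges retained near $u$) with the cheap backbone detours they enable, then that coarser edge is itself pruned; so the retained edges do not pile up across scales, and each band of $\MST$ edges of length $\approx 2^i$ is charged at only $O(\log(1/\eps))$ levels. Making this argument precise is the crux: a dense piece of the metric need not be geodesic and the $\MST$ through it may fold back on itself, so the replacement paths have to be assembled hop by hop through a finer net, with the accumulated detour error controlled against the pruning threshold. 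Finally, the running time follows from the standard doubling-metric toolkit: the nets, the net tree and an approximate $\MST$ are computed in $(\ddim/\eps)^{O(\ddim)}n\log n$ time, the $(\ddim/\eps)^{O(\ddim)}n$ candidate pairs are enumerated through the hierarchy, and each pruning test is an approximate shortest-path query answered in $O(\log n)$ time in the bounded-degree partial spanner, for $(\ddim/\eps)^{O(\ddim)}n\log^2 n$ overall.
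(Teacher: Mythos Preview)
Your stretch argument is fine and standard. The gap is precisely where you say it is: the cross-scale charging. The single-level charge via sphere crossing is correct --- a retained net point $u\in N_i$ does have $\Omega(2^i)$ of $\MST$ weight in its $\Theta(2^i/\eps)$-ball, and packing bounds the number of level-$i$ net points that see a fixed $\MST$ edge. But the assertion that ``retained edges do not pile up across scales'' because ``a coarser cross-edge through the same region \ldots\ is itself pruned'' is not an argument, it is a restatement of what must be proved. Consider a region where the $\MST$ inside a ball of radius $r$ has weight $\Theta(r\log n)$ (this can happen in doubling metrics; it is exactly why the previous best lightness was $\Omega(\log n)$). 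Your backbone detours through this region are useless: the $\MST$ path between two points at distance $r$ can have length $\Theta(r\log n)$, so the pruning test fails and the coarser edge is kept. The finer net edges you have already placed do not help either, because their total weight at each finer scale is bounded by the same charging you are trying to set up --- you cannot assume they cover the region densely enough to furnish $(1+\eps/4)$-paths without circularity. You acknowledge that ``a dense piece of the metric need not be geodesic and the $\MST$ through it may fold back on itself'', but then offer no mechanism to handle it; ``assembled hop by hop through a finer net'' is not a bound.

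The paper does \emph{not} analyze the greedy hierarchical spanner at all --- it explicitly lists ``does the simple greedy hierarchical spanner alone yield a light spanner?'' as an open problem. Instead it sidesteps the cross-scale issue entirely by a decomposition. First (Theorem~\ref{thm:sparse}) it proves lightness for the special case where the space already has an $s$-\emph{sparse} spanning tree, meaning $w(\MST\cap B^*(v,r))\le sr$ for every ball; under that hypothesis there is no dense region, so a per-level charge of the kind you sketch actually terminates with lightness $(s/\eps)^{O(1)}$. Then (Lemma~\ref{lem:spinoff} and Theorem~\ref{thm:doubling}) it shows that an arbitrary doubling metric can be carved into overlapping pieces $D_i$, each admitting an $(\ddim/\eps)^{O(\ddim)}$-sparse spanning tree, with $\sum_i w(\MST(D_i)) = (\ddim/\eps)^{O(\ddim)} w(\MST(G))$ and every $c$-neighbor pair of the hierarchy landing together in some $D_i$. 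The carving is the new idea: one locates the lowest level at which some ball is too dense, spins it off as a piece (whose sparsity is forced by minimality of the level), and shows via an annulus-selection argument on $\MST^{NR}$ that the residual graph has lost a definite fraction of $\MST^{NR}$ weight, so the total over all pieces telescopes. The final spanner is the union of the light spanners on the pieces. Your proposal contains neither of these two components.
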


Hence we resolve the question in the affirmative.
We first prove that graphs which admit spanning trees that
are everywhere sparse also admit light low-stretch spanners
(Theorem \ref{thm:sparse} in Section \ref{sec:sparse}). 
We then show that doubling spaces can be decomposed into sets 
that are everywhere sparse, and that the light spanners on
these sparse sets can be joined together into a light 
low-stretch spanner for the original set (Section \ref{sec:doubling}).

\paragraph{Related work.}
Light spanners are known for only a handful of settings. These include
planar graphs \cite{ADDJS93,ACCDSZ-96},
unit disk graphs \cite{KPX-08},
graphs which are snowflakes of metrics \cite{GS-14},
and graphs of bounded pathwidth \cite{GH-12}
and bounded genus \cite{DMM-10}.

\paragraph{Application to traveling salesman.}
A polynomial-time approximation scheme (PTAS) for the Euclidean
traveling salesman problem (TSP) was presented by 
Arora \cite{A-98} (for all $d$-dimensional space) and Mitchell
\cite{M-99} (for the Euclidean plane). These celebrated results
were further improved upon by Rao and Smith \cite{RS98},
who noted that the solution tour may be restricted to the edges of
a light low-weight spanner for the sest. They used this observation
to improve the runtime of the Euclidean PTAS.

Building on these frameworks, Bartal \etal \cite{BGK-12} showed
that in {\em metric} spaces a $(1+\eps)$-approximate tour can be 
computed in time roughly
$n^{2^{O(\ddim)}}$. They also noted that whatever the minimum value
of $W_D$ may be, a tour can be computed in time 
$2^{(W_D 2^{\ddim(S)}/\eps)^{O(\ddim)} \log^2 \log n} n$.
Immediate from the work of Bartal and Gottlieb
\cite{BG-13} is a better runtime of
$2^{(W_D 2^{\ddim(S)}/\eps)^{O(\ddim)}} n$,
plus the time required to compute the spanner.
As a consequence of Theorem \ref{thm:main}, we conclude that 
doubling spaces admit a PTAS with runtime
$2^{(\ddim(S)/\eps)^{O(\ddim^2)}}n + (\ddim/\eps)^{O(\ddim)} n \log^2 n$.
This is a dramatic improvement over what was previously known.%
\footnote{The result attributed to \cite{BGK-12} is found in
Theorem 4.1 of the first arxiv version. The focus of \cite{BG-13} was a 
{\em linear}-time approximation scheme for Euclidean spaces 
under the word-RAM computational model, but aside from the time
required to construct a hierarchy and spanner, as well as a 
$\MST$ for small areas, the entire construction holds for metric spaces. A theorem 
to this effect will be appended to a future version of \cite{BG-13},
and is beyond the scope of this paper.}

\subsection{Preliminaries and notation}

\noindent{\bf Graph properties.}
Let the weight of an edge $e$ $(w(e))$ be its length. Let the 
weight of an edge set $E$ be the sum of the weights of its edges,
$w(E) = \sum_{e \in E} w(e)$. Similarly, the weight of any graph
$G=(V_G,E_G)$ is the weight of its edge-set, $w(G) = w(E_G)$.
It follows that the weight of a path is its path length.
For a partial graph $R = (V_R,E_R) \subset G$ (meaning $V_R \subset V_G$, $E_R \subset E_G$), 
$\diam_G(R)$ is the diameter of $R$
under the distance function $d_G$; it is the maximum distance in $G$ between
a pair of vertices also in $R$. Then $\diam_R(R)$ is just the length of the
longest path in $R$.

Let $B(u,r) \subset V_G$ refer
to the vertices of $V_G$ contained in the closed ball centered at $u \in V_G$ with radius $r$.
$B^*(u,r) \subset E_G$ is the edge set of the complete graph on $B(u,r)$.
A graph $G$ is said to be $s$-sparse if 
for every radius $r$ and vertex $v \in V$, the weight of 
$B^*(v,r) \cap E_G$ is at most $sr$. 
Let $A(u,r_1,r_2)$ be the annulus that includes all points at 
distance from $u$ in the range $[r_1,r_2]$,
and $A^*(u,r_1,r_2)$ the edge-set on the complete graph on $A(u,r_1,r_2)$.
Points within distance $r_1$ of $u$ are within the hollow
of the annulus, and not part of it.

\noindent{\bf Doubling dimension.}
For a point set $S$, let $\lambda = \lambda(S)$ 
be the smallest number such that every
ball in $S$ can be covered by $\lambda$ balls of half the
radius, where all balls are centered at points of $S$. 
Then $\lambda$ is the {\em doubling constant} of $S$,
and the {\em doubling dimension} of $S$ is 
$\ddim = \ddim(S)=\log_2\lambda$.
The following lemma states the well-known packing property of doubling spaces
(see for example \cite{KL-04}).

\begin{lemma}\label{lem:metric-pack}
If $S$ is a metric space and $C \subseteq S$ has
minimum inter-point distance $b$, then
$|C| = \left( \frac{2\rad(S)}{b} \right)^{O(\ddim)}$.
\end{lemma}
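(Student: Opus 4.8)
The plan is to apply the doubling property of $S$ iteratively, starting from a single ball that contains all of $C$ and repeatedly halving the radius until each ball is too small to contain two points of $C$. First I would fix an arbitrary point $p \in C$ and note that every point of $C$ lies within distance $\diam(S) \le 2\rad(S)$ of $p$, so $C \subseteq B(p, 2\rad(S))$. Working from a ball centered at a point of $S$ this way sidesteps the (minor) nuisance that the center of a minimum enclosing ball need not itself be a point of $S$, whereas the definition of the doubling constant only sanctions balls centered at points of $S$; the factor $2$ lost here is immaterial for the final bound.

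Next I would invoke the doubling constant $\lambda = 2^{\ddim}$ repeatedly: $B(p, 2\rad(S))$ is covered by $\lambda$ balls of radius $\rad(S)$, each of those by $\lambda$ balls of radius $\rad(S)/2$, and so on, all balls centered at points of $S$. After $k$ rounds, $B(p, 2\rad(S))$ — and hence $C$ — is covered by at most $\lambda^k$ balls of radius $2\rad(S)/2^k$. I would then choose $k = \lceil \log_2(4\rad(S)/b) \rceil$, so that each of these balls has radius strictly less than $b/2$, and therefore diameter strictly less than $b$. Since every two distinct points of $C$ are at distance at least $b$, no such ball can contain two points of $C$.

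Combining these observations gives $|C| \le \lambda^k \le \lambda \cdot (4\rad(S)/b)^{\log_2 \lambda} = (2\rad(S)/b)^{O(\ddim)}$, where the final step absorbs the extra $\lambda = 2^{\ddim}$ factor and the constant $4$ into the $O(\ddim)$ in the exponent; the bound is trivially true in the degenerate regime $b > 2\rad(S)$, where $|C| \le 1$. I do not anticipate a genuine obstacle here — the argument is the standard packing estimate — and the only point meriting a word of care is precisely the reduction, noted above, to a ball centered at a point of $S$ so that the doubling property may be applied verbatim.
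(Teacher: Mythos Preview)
The paper does not actually prove this lemma: it states the packing property as well known and cites \cite{KL-04}. Your argument is the standard one and is correct; the only quibble is that with $k=\lceil\log_2(4\rad(S)/b)\rceil$ the covering radius is $\le b/2$ rather than strictly $<b/2$, but adding $1$ to $k$ fixes this and is already absorbed by your $O(\ddim)$ exponent.
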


%

\noindent{\bf Point hierarchies.}
Similar to what was described in \cite{GGN-06,KL-04}, a subset of points $X \subseteq Y$
is an $r$-net of $Y$ if it satisfies the following properties:
\renewcommand{\labelenumi}{(\roman{enumi})}
\begin{enumerate}
\item Packing: For every $x,y \in X$, $d(x,y) \ge r$.
\item Covering: Every point $y \in Y$ is strictly within distance $r$ of
some point $x \in X$: $d(x,y) < r$.
\end{enumerate}
The previous conditions require that the points of $X$ be spaced out, yet
nevertheless cover all points of $Y$. A point in $X$ covering a point in
$Y$ is called a {\em parent} of the covered point; this definition allows
for a point to have multiple parents, but we shall assign to the point
a single parent arbitrarily. 
Two net-points $x,y$ of some $r$-net
are called $c$-{\em neighbors} if $d(x,y) < cr$. By the packing property
of doubling spaces, a net-point $x$ can have $c^{O(\ddim)}$ $c$-neighbors.

A hierarchy for a point set $S$ is composed of nested levels of $r$-nets,
where each level $i$ of the hierarchy is a $2^i$-net of the level $i-1$
beneath it. 
Note that the distance from an $i$-level point to all its descendants is less than
$\sum_{j=0}^{\infty} 2^{i-j} = 2 \cdot 2^i$.
By scaling, we will assume throughout this paper that the minimum 
inter-point distance in all sets is at least 1.
Then we have that top level $H:=\lceil \log_2 \diam(S) \rceil$ 
contains a single point, and the 0-level contains all of $S$. (Below, we will find it
convenient to allow for levels $i<0$, and each of these levels contains
all of $S$ as well. The bottom level is $L$) A full hierarchy for $S$
constructed in time $2^{O(\ddim)} n \log \Delta$, where $\Delta$ is
the {\em aspect ratio} of $S$, the ratio between the largest and smallest
inter-point distance in $S$ \cite{KL-04}. 
A hierarchy can also be constructed in time 
$2^{O(\ddim)} n \log n$, and in this case some isolated net-points 
are represented implicitly \cite{HM-06, CG-06}.
We can also augment the hierarchy with $c$-neighbor lists for all hierarchical
levels and points: It is known that there are only 
$c^{O(\ddim)}n$ neighbor pairs, and that given the hierarchy they may all
be discovered in $c^{O(\ddim)}n$ time and space.

An $r$-{\em semi-net} is a net that satisfies the covering property but not
the packing property.
A {\em semi-hierarchy} is composed of levels of nested semi-nets.

\noindent{\bf Net-Respecting spanning trees.}
We will say that a spanning tree $T$ is 
\emph{net respecting (NR)} relative to a
given hierarchy if for every edge $e$ in $T$, 
both of its endpoints are $i$-level net-points for $i$ satisfying
$12 \cdot 2^i \leq w(e) < 24 \cdot 2^{i+1}$.
(Recall from above that we may assume the existence of levels below $i=0$.)
The following lemma is adapted from Lemma 1.6 in \cite{GKK13}.

\begin{lemma}\label{lem:respect}
Every spanning tree $T$ can be converted into a net-respecting 
spanning tree $T'$ that visits all points visited by $T$, such that
$w(T')\leq 5 w(T).$
\end{lemma}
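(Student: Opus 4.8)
The plan is to take an arbitrary spanning tree $T$ and repair each of its edges so that its endpoints become net-points at the level dictated by the edge's weight, while controlling the total weight blow-up by a constant factor. First I would process the edges of $T$ one at a time. Consider an edge $e=(u,v)$ of weight $w(e)$, and let $i$ be the unique integer satisfying $12\cdot 2^i \le w(e) < 24\cdot 2^{i+1}$ (the two bounds overlap, which gives some slack; we can pick $i$ by, say, $i=\lfloor\log_2(w(e)/12)\rfloor$). We then want to replace $e$ by a short path whose every edge is net-respecting. The natural move is to walk from $u$ up the hierarchy to its $i$-level ancestor $u_i$ (an $i$-level net-point), and similarly from $v$ to its $i$-level ancestor $v_i$, and then connect $u_i$ to $v_i$. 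The edge $(u_i,v_i)$ has weight close to $w(e)$: since each of $u,v$ is within $2\cdot 2^i$ of its $i$-level ancestor, $|w(u_i,v_i)-w(e)|\le 4\cdot 2^i$, and one checks this keeps $(u_i,v_i)$ in the allowed range (possibly after adjusting the choice of $i$ by the slack in the definition — this is exactly what the generous constants $12$ and $24$ buy us).

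The segments from $u$ to $u_i$ and from $v$ to $v_i$ are themselves not single net-respecting edges, so I would handle them by the standard telescoping/recursive argument: the path from a point $x$ to its $j$-level ancestor $x_j$ consists of hops $x \to x_{\text{parent}} \to x_{\text{parent of parent}} \to \cdots$, where the hop from a $k$-level point to its $(k{+}1)$-level parent has weight less than $2\cdot 2^{k+1}$. Such a hop is not automatically net-respecting, but it has small weight relative to $2^k$, so we can recurse on it; because the weights decay geometrically as we go down, the total cost of making the whole ancestor-path net-respecting is a geometric series summing to $O(2^i)$, i.e.\ $O(w(e))$ with a controllable constant. (Alternatively, and more cleanly, one observes that an edge between a net-point and its parent can be replaced by a net-respecting path by the same construction applied one level down, and the recursion bottoms out because we have allowed arbitrarily negative levels while the minimum inter-point distance is at least $1$, so only finitely many levels are ever needed.) Summing the geometric series and the contribution of the central edge $(u_i,v_i)$, the replacement for $e$ has weight at most $5\,w(e)$.

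Finally I would argue that performing this replacement on every edge of $T$ yields a connected subgraph $T''$ (a spanning {\em subgraph}, since each original edge is replaced by a path between the same endpoints, the point set visited only grows, and connectivity is preserved), take a spanning tree $T'$ inside $T''$, and note $w(T')\le w(T'')\le \sum_{e\in T}5\,w(e)=5\,w(T)$. The point set visited by $T'$ contains all points visited by $T$ because every original endpoint still appears. The main obstacle — really the only delicate point — is bookkeeping the constants: one must verify that after replacing $e$ by the path through the $i$-level ancestors, every edge on that path genuinely lands in its prescribed level-window $[12\cdot 2^j, 24\cdot 2^{j+1})$, and that the geometric series of correction costs, together with the central edge, does not exceed the factor $5$. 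This is where the asymmetric constants $12$ and $24$ (and the freedom in choosing $i$) are used, and it is the part that requires care rather than ideas; everything else is the routine ``climb to a common ancestor level and telescope'' template, which is presumably why the lemma is cited as an adaptation of Lemma~1.6 of \cite{GKK13}.
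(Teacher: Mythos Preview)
Your approach is essentially the paper's: replace each edge $e=(u,v)$ by the triple $(u,u_i),(u_i,v_i),(v_i,v)$ where $u_i,v_i$ are $i$-level ancestors, observe $(u_i,v_i)$ is net-respecting, and recurse on the short edges with a geometric-series bound. The only cosmetic difference is that the paper picks $i$ as the least level for which the \emph{ancestors'} distance $d(x',y')$ lies in $[12\cdot 2^i,24\cdot 2^{i+1})$, which makes the central edge net-respecting by fiat and spares the ``adjust $i$ using the slack'' step you mention; the resulting arithmetic is the same.
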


\begin{proof}
For every edge $e = (x,y)$ in $T$ do the following.
Let $x',y'$ be $i$-level net-point ancestors of $x,y$ respectively,
for the lowest $i$ satisfying 
$12 \cdot 2^i \leq d(x',y') < 24 \cdot 2^{i+1}$.
Replace edge $e$ with long edge $e' = (x',y')$, and also add the short edges
$(x,x')$ and $(y,y')$. The short edges have weight at most $2 \cdot 2^i$,
while by the triangle inequality
$w(e) 
\ge w(e') - 4 \cdot 2^i 
\ge 8 \cdot 2^i$.
While edge $e'$ is net-respecting, edges
$(x,x')$ and $(y,y')$ may not be. The short edges are themselves
replaced by the procedure above. This leads to a series of edge
replacements, and if the replacement adds an edge already in the
set, we will remove the duplicate.
The total weight of all short edges added by the recursive 
procedure is a geometric seriers summing to less than
$8 \cdot 2^i \le w(e')$.
As the ratio of the long edge to the original edge is at most $\frac{3}{2}$,
the total weight of all edges is $2w(e') (1+\frac{3}{2}) = 5w(e')$.
\end{proof}

Let $\MST^{NR}$ denote the 
net-respecting minimum spanning tree.
The following lemma related a local minimum spanning
tree to a global net-respecting minimum spanning true.
It is immediate from Lemma 1.11 in \cite{GKK13}, 
which made a similar claim concerning TSP tours.

\begin{lemma}\label{lem:weight}
Let $S$ be a point set equipped with a hierarchy, 
and $S' \subset S$.
Let $u$ be any point of $S$, and $r>0$ any value. Then
\begin{enumerate}
\item
$
w(\MST^{NR}(S) \cap B^*(u,r)) 
\le 14(w(\MST(B(u,r)))
$
\item
$
w(\MST(B(u,r))) 
\le 2w(\MST^{NR}(S) \cap B^*(u,4r))
+ 2^{O(\ddim)}r.
$
\end{enumerate}
\end{lemma}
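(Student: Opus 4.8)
The plan is to derive both inequalities of Lemma~\ref{lem:weight} by relating the two quantities through explicit tree surgeries, mirroring the TSP argument of Lemma~1.11 in \cite{GKK13}. For part~(1), I would start with the net-respecting tree $\MST^{NR}(S)$ and restrict attention to its edges lying entirely inside $B(u,r)$, i.e.\ the edges of $\MST^{NR}(S) \cap B^*(u,r)$. These edges, together with the vertices they touch, form a forest on a subset of $B(u,r)$; I would bound the weight of this forest by building a connected spanning structure on all of $B(u,r)$ at comparable cost, which then dominates $w(\MST(B(u,r)))$ from below only after accounting for the constant blowup. Concretely, each connected component of the restricted forest can be ``charged'' to a subtree of the global net-respecting MST, and because edges of a net-respecting tree at scale $2^i$ join $i$-level net-points, the packing property (Lemma~\ref{lem:metric-pack}) controls how many short edges accumulate near any point; summing the geometric series of edge weights across scales yields the factor $14$. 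The key point is that the local MST on $B(u,r)$ can be no cheaper than a minimum-weight way of connecting the same points, and the restricted net-respecting forest, suitably completed, is one such way.

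For part~(2), the direction is reversed: I would take the local $\MST(B(u,r))$ and show it can be approximated, up to a factor $2$ plus an additive $2^{O(\ddim)}r$ term, by edges of $\MST^{NR}(S)$ that lie within the larger ball $B^*(u,4r)$. Here the plan is to process each edge $e=(x,y)$ of $\MST(B(u,r))$ and replace it by a path in $\MST^{NR}(S)$: since both $x$ and $y$ lie in $S$, the global net-respecting tree already connects them, but the connecting path might wander outside $B(u,4r)$. The standard fix is to reroute such excursions through net-points at an appropriate scale; an excursion leaving $B(u,4r)$ from a ball of radius $r$ around $u$ must traverse an annulus of width $\ge 3r$, so it has length $\ge 3r$, and there can be only $2^{O(\ddim)}$ such ``long'' edges in $\MST(B(u,r))$ by the packing bound applied to $B(u,r)$ (edges of the MST at scale $\ge r$ are few). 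Each of these is simply paid for directly, contributing the additive $2^{O(\ddim)}r$. The remaining edges stay within $B(u,4r)$ after replacement, and the replacement inflates total weight by at most a constant, which I would track to be $2$ as claimed, using Lemma~\ref{lem:respect}'s bookkeeping as a template.

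The main obstacle I anticipate is the careful scale-by-scale accounting in part~(1): one must ensure that when a connected component of the restricted net-respecting forest is extended to span its share of $B(u,r)$, the added edges do not double-count weight already present, and that the constant does not blow up beyond $14$. The net-respecting property is doing real work here, because it guarantees that an edge of weight $w$ has both endpoints that are $i$-level net-points with $2^i \asymp w$, so the number of edges incident to a fixed point at each scale is $O(1)$ after the $c$-neighbor bound, and their total weight is a convergent geometric series. I would organize this by fixing a scale $2^i$, bounding the contribution of edges at that scale crossing into or near $B(u,r)$ using Lemma~\ref{lem:metric-pack} with $b = 2^i$ and radius $O(r)$, and then summing over $i \le \log r + O(1)$. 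Since the lemma is stated as immediate from \cite{GKK13}, I would in the final writeup simply cite that correspondence and sketch the translation from the TSP-tour setting (where one reroutes a tour) to the spanning-tree setting (where one reroutes tree paths), noting that the tour-to-tree change affects only the small constants and not the structure of the argument.
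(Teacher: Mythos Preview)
The paper does not actually prove this lemma; it simply states that the result is immediate from Lemma~1.11 of \cite{GKK13} (the TSP analogue) and moves on. Your ultimate plan to cite that correspondence is therefore exactly what the paper does. However, the detailed sketch you give before that has genuine problems in both parts, and would not yield a proof if carried out as written.

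For part~(1), your logic runs in the wrong direction. You propose to complete the restricted forest $\MST^{NR}(S)\cap B^*(u,r)$ to a spanning structure on all of $B(u,r)$; any such structure has weight at least $w(\MST(B(u,r)))$, since the latter is minimum. That gives a \emph{lower} bound on (forest $+$ completion), not an upper bound on the forest in terms of $w(\MST(B(u,r)))$, which is what you need. The argument that actually works goes the other way: start from $\MST(B(u,r))$, convert it to a net-respecting tree at cost factor~$5$ via Lemma~\ref{lem:respect}, splice this into $\MST^{NR}(S)$ in place of the edges inside $B^*(u,r)$, reconnect the boundary, and then invoke the global minimality of $\MST^{NR}(S)$ among net-respecting spanning trees to conclude that the removed part could not have been much heavier than what replaced it.

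For part~(2), the edge-by-path replacement cannot give the stated bound. If you replace each edge $(x,y)$ of $\MST(B(u,r))$ by the tree path from $x$ to $y$ in $\MST^{NR}(S)$, edges of $\MST^{NR}(S)$ may be reused across many such paths, and the sum of path lengths is not in general bounded by any constant times $w(\MST^{NR}(S)\cap B^*(u,4r))$; already a star-versus-path example gives a blowup of order $n$. The correct route is direct: the forest $\MST^{NR}(S)\cap B^*(u,4r)$, together with $2^{O(\ddim)}$ connector edges of length $O(r)$ (one per component meeting $B(u,r)$, the component count bounded via the net-respecting property and Lemma~\ref{lem:metric-pack}), spans a superset of $B(u,r)$. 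The factor~$2$ then comes precisely from the ``removing points at most doubles the MST'' fact the paper recalls immediately after the lemma statement.
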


We recall that the removal of points from a set can increase the
weight of its minimum spanning tree by at most a factor of 2.
Let $S'$ consist of $B(u,2^i)$ along with all hierarchical 
ancestors of these points up to level $i$.
Then $S' \subset B(u,3 \cdot 2^i)$. We conclude that 
\begin{equation}\label{eq:weight}
\frac{1}{7} w(\MST^{NR}(S') \cap B^*(u,r))
\le
w(\MST(S')) 
\le 4w(\MST^{NR}(S) \cap B^*(u,12r))
+ 2^{O(\ddim)}r.
\end{equation}

Finally, we note that for any $S$, 
$w(\MST(S)) \le |S| \diam(S)$.
Tighter bounds are in fact known for doubling spaces 
\cite{A-98, T-04, S-10}, but are not necessary for our results.

\section{Background: Hierarchical spanners}\label{sec:background}
Before presenting our results, we review the basic hierarchical spanner
introduced by Gao \etal \cite{GGN-06}, along with some small modifications
suited for our purposes. This spanner provides a theoretical framework for
our approach, although our ultimate construction will be significantly
more involved. The results presented in this section are mostly well-known.

Throughout this paper, we will make the simplifying assumption that $G$ 
is a complete graph. 
A {\em complete hierarchical spanner} $R$ for a graph $G$ on $S$ 
is derived from the hierarchy of $S$. 
For some constant $c$, 
for each levels $i$
we add an edge between every pair of $i$-level net-points with inter-point
distance at most $c2^i$.
More precisely, we add to $E_R$ an edge between the two vertices 
in $V_R$ that represent these points.
Recall that the top level of the hierarchy is $H$, and the
minimum inter-point distance in $S$ in 1.
For this construction we will
assume that the bottom level of the hierarchy is
$L: \lfloor \log (1/c) \rfloor$.
Let $M=H-L+1$ be the number of levels in the hierarchy for $S$.
To construct the complete hierarchical spanner, it suffices 
to construct a hierarchy for $S$ and compute
all relevant $c$-neighbors, in total time
$2^{O(\ddim)}n \log n + c^{O(\ddim)}n$ \cite{GR08a,GR08b}.

\begin{lemma}\label{lem:full-stretch}
Any spanner $R$ for graph $G$ on $S$ 
which satisfies that all $i$-level net-point
$c$-neighbor pairs have 0-stretch (for all $i$),
is a $(1+\frac{32}{c})$-stretch spanner 
for all of $G$, when $c \ge 24$.
\end{lemma}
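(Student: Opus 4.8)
The plan is to prove this by induction on the hierarchical level, using a standard "route through ancestors" argument. Given any pair of points $u, v \in S$, let $i$ be the level such that $c2^i \le d(u,v) < c2^{i+1}$ (so that the "right scale" for this pair is level $i+1$ or so). First I would take the $i$-level net-point ancestors $u', v'$ of $u$ and $v$ respectively. Since the distance from an $i$-level point to its descendants is less than $2 \cdot 2^i$, we have $d(u,u'), d(v,v') < 2 \cdot 2^i$, and by the triangle inequality $d(u',v') < d(u,v) + 4 \cdot 2^i \le c2^{i+1} + 4 \cdot 2^i = (2c+4)2^i$. The difficulty is that $u',v'$ need not be $c$-neighbors at level $i$ (the bound $(2c+4)2^i$ exceeds $c2^i$), so there is no guaranteed spanner edge between them at this level. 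To fix this I would instead go up to a level $j > i$ chosen so that the $j$-level ancestors $u_j, v_j$ satisfy $d(u_j, v_j) < c2^j$: since $d(u_j,v_j) \le d(u,v) + 4 \cdot 2^j < c2^{i+1} + 4 \cdot 2^j$, it suffices to take $j = i+3$ or so (then $c2^{i+1} + 4\cdot 2^{i+3} = c2^{i+1} + 2^{i+5} \le c2^{i+3}$ when $c$ is a large enough constant, which is where the hypothesis $c \ge 24$ enters). At that level $u_j, v_j$ are $c$-neighbors, hence connected in $R$ with zero stretch.

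The main step is then to bound the cost of climbing from $u$ to $u_j$ (and symmetrically from $v$ to $v_j$) within the spanner. The point is that consecutive ancestors at levels $k$ and $k+1$ are within distance $2 \cdot 2^k$ of each other (distance from a $k$-level point to descendants) — actually one should be slightly more careful and observe that a point and its immediate parent at level $k+1$ are within $2^{k+1}$, and more usefully that a $k$-level point and its $k$-level ancestor-of-the-parent chain telescopes. The cleanest route: the $k$-level ancestor $u_k$ and the $(k+1)$-level ancestor $u_{k+1}$ of $u$ satisfy $d(u_k, u_{k+1}) \le d(u_k, u) + d(u, u_{k+1}) < 2\cdot 2^k + 2 \cdot 2^{k+1} = 3 \cdot 2^{k+1}$. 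This isn't automatically a $c$-neighbor edge either, so one applies the same level-lifting trick, or — more simply, since $u_k$ is a descendant of $u_{k+1}$ — one routes $u_k \to u_{k+1}$ recursively by the very statement being proved, but at a strictly smaller scale, giving a clean induction. Summing the geometric series of these climbing costs from level $0$ up to level $j$ yields a total detour of $O(2^j) = O(2^i)$, and since $d(u,v) \ge c2^i$, the additive detour is $O(2^i) = O(d(u,v)/c)$ on each side.

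Putting it together: $d_R(u,v) \le d_R(u,u_j) + d_R(u_j,v_j) + d_R(v_j,v) = d_R(u,u_j) + d(u_j,v_j) + d_R(v,v_j)$, where the middle term is at most $d(u,v) + 4 \cdot 2^j$, and I would track constants so that the accumulated climbing cost plus the slack $4 \cdot 2^j$ sum to at most $\frac{32}{c} d(u,v)$. The bookkeeping gives roughly: climbing cost $\le \sum_{k \le j} 3 \cdot 2^{k+1} \cdot (\text{small constant}) = O(2^j)$, plus $4 \cdot 2^j$, and with $2^j \le 8 \cdot 2^i \le 8 d(u,v)/c$, all the $O(\cdot)$ constants collapse into the stated $\frac{32}{c}$ once $c \ge 24$.

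The main obstacle I anticipate is purely the constant-chasing: making sure the chosen gap $j - i$ is large enough that $d(u_j,v_j) < c2^j$ genuinely holds, while small enough that $2^j = O(2^i)$ with a good enough constant, and simultaneously that the telescoped climbing series — which at each level is itself routed via a net-neighbor edge only after another small level-lift — does not blow up the factor beyond $32/c$. A clean way to organize this is to prove the slightly stronger inductive statement "for all $u,v$ with $d(u,v) < c2^m$, $d_R(u,v) \le (1 + \frac{32}{c})d(u,v)$" by induction on $m$, so that the climbing sub-routes are handled by the inductive hypothesis rather than re-derived; the base case $m = L$ is trivial since then $u = v$ or the single relevant edge is present.
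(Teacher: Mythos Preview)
Your approach is sound in outline --- route through ancestors at a level where they are $c$-neighbors, and bound the climbing cost --- but it is more convoluted than necessary, and the convolution is exactly what makes the constant-chasing painful for you. Two observations from the paper's proof dissolve most of your worries.

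First, rather than choosing a fixed offset $j=i+3$, the paper takes $i$ to be the \emph{minimum} level at which the ancestors $x',y'$ are $c$-neighbors. This choice automatically gives both an upper bound $d_G(x',y')\le c2^i$ and, via the $(i-1)$-level ancestors $x'',y''$ (which by minimality are \emph{not} $c$-neighbors), a lower bound $d_G(x',y')>(c-4)2^{i-1}$. You never need to relate $i$ to $d(x,y)$ explicitly; the ratio is computed entirely in terms of $d_G(x',y')$ and $2^i$, and the constant $\frac{32}{c}$ drops out with no guessing.

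Second, your concern that the step $u_k\to u_{k+1}$ ``isn't automatically a $c$-neighbor edge'' is misplaced. Since the hierarchy is nested, $u_{k+1}$ is also a $k$-level net-point, and since it is the \emph{parent} of $u_k$ (not merely some ancestor reached via the base point $u$), one has $d(u_k,u_{k+1})<2^{k+1}=2\cdot 2^k<c\cdot 2^k$. Thus the pair is a $c$-neighbor pair at level $k$, and by hypothesis the direct edge is present in $R$. Summing the parent chain gives $d_R(x,x')<\sum_{k<i}2^{k+1}<2\cdot 2^i$ outright --- no induction, no recursive level-lifting. Your detour bound $d(u_k,u_{k+1})\le d(u_k,u)+d(u,u_{k+1})<3\cdot 2^{k+1}$ is overly pessimistic precisely because it ignores the parent relationship.

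With these two simplifications the whole argument is four lines:
\[
\frac{d_R(x,y)}{d_G(x,y)}
\;\le\;\frac{d_G(x',y')+2\cdot 2^{i+1}}{d_G(x',y')-2\cdot 2^{i+1}}
\;<\;\frac{(c-4)2^{i-1}+2^{i+2}}{(c-4)2^{i-1}-2^{i+2}}
\;=\;\frac{c+4}{c-12}
\;\le\;1+\frac{32}{c}
\]
for $c\ge 24$. Your inductive scheme would eventually work, but with a worse constant (your rough count already gave $64/c$), and the induction is simply unnecessary.
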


\begin{proof}
Consider any two net-points $x,y$ in the bottom level $L$
of the hierarchy of $S$. Let $x',y'$ be their respective
$i$-level ancestors, where $i$ is the minimum value for
which $x',y'$ are $c$-neighbors
We have $d_G(x',y') \le c2^i$.
To derive a lower-bound on $d_G(x',y')$,
let $x'',y''$ be the $(i-1)$-level ancestors, and by assumption
$d_G(x'',y'') > c2^{i-1}$. Applying the triangle inequality, we have 
$d_G(x',y') 
\ge -d_G(x',x'') + d_G(x'',y'') - d_G(y'',y')
> c2^{i-1} - 2 \cdot 2^{i}
= (c-4)2^{i-1}$.
%
Since $x',y'$ are $c$-neighbors,
$d_R(x',y') = d_G(x',y')$.
Applying the triangle inequality we have
$\frac{d_R(x,y)}{d_G(x,y)}
\le \frac{d_R(x,x') + d_R(x',y') + d_R(y',y)}
	{-d_G(x,x') + d_G(x',y') - d_G(y',y)}
< \frac{d_G(x',y') + 2 \cdot 2^{i+1}}
	{d_G(x',y') - 2 \cdot 2^{i+1}}
< \frac{ (c-4)2^{i-1} + 2^{i+2} }
	{ (c-4)2^{i-1} - 2^{i+2} }
= \frac{c+4}{c-12}
= 1+\frac{16}{c-12}
\le 1 + \frac{32}{c}
.$
\end{proof}

The stretch of this spanner is arbitrarily low,
but it has poor lightness bounds:
In the trivial case where the $n$ points of $S$ reside
on the line at intervals of distance 1, it is easy to see that 
$w(E_R) = \Theta(n \log n) = \Theta(\log n \cdot w(\MST(S))$. 
We may consider being more stingy with
added edges by using the greedy algorithm of \cite{DN-97}
(see also \cite{GLN-02}), and we call this construction the 
{\em greedy hierarchical spanner}: We build the spanner as before, 
but first sort all $c^{O(\ddim)}n$ edges in increasing order. 
We consider each edge in turn, and before adding an
edge between $c$-neighbors, we check to see if the
stretch between them on the current partial spanner 
exceeds some threshold -- for example, it is at most 
$1+b$ for some constant $0 < b \le 1$ -- and only add the
edge if the condition is met.
%
%
In Lemma \ref{lem:greedy-stretch} below, we show that the greedy hierarchical 
spanner has $1+O(\frac{1}{c} + b)$ stretch. 

\begin{lemma}\label{lem:greedy-stretch}
A spanner $R$ for graph $G$ on $S$, 
for which all $i$-level net-point
$c$-neighbor pairs have 
$(1+b)$-stretch (for all $i$)
for any constant $b>0$, 
is a $(1+\frac{32}{c} + 6b)$-stretch spanner 
for all of $G$,
when $c \ge 24$.
\end{lemma}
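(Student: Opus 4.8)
The plan is to mimic the proof of Lemma~\ref{lem:full-stretch}, but to track the additional multiplicative error contributed by the fact that each $c$-neighbor pair now has stretch $1+b$ rather than $0$. The recursion is the same: fix two bottom-level points $x,y$ with $i$-level ancestors $x',y'$ that are $c$-neighbors for the minimal such $i$. As in Lemma~\ref{lem:full-stretch}, we get $(c-4)2^{i-1} < d_G(x',y') \le c2^i$, and we reach $x'$ from $x$ (and $y'$ from $y$) by walking up the hierarchy one level at a time, at each level using the spanner path between a point and its parent. Each such edge $(z,\mathrm{parent}(z))$ is a $c$-neighbor pair at its level (since parent distance is $<2^j < c2^j$), so in $R$ it is spanned with stretch $1+b$; the sum of the true lengths of these hops telescopes to less than $2\cdot 2^{i+1}$ on each side, so the spanner cost of getting from $x$ to $x'$ and from $y'$ to $y$ is at most $(1+b)\cdot 2\cdot 2^{i+1}$ on each side.

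Second, I would bound $d_R(x',y')$. Unlike in Lemma~\ref{lem:full-stretch}, the pair $x',y'$ is \emph{not} necessarily directly $c$-spanned with $0$-stretch — the greedy rule may have declined to add that edge precisely because the stretch was already at most $1+b$. Either way, $d_R(x',y') \le (1+b)\, d_G(x',y')$. Putting the pieces together,
\[
\frac{d_R(x,y)}{d_G(x,y)}
\le \frac{(1+b)\,d_G(x',y') + 2(1+b)\cdot 2\cdot 2^{i+1}}
{d_G(x',y') - 2\cdot 2\cdot 2^{i+1}}
= (1+b)\cdot\frac{d_G(x',y') + 2^{i+3}}{d_G(x',y') - 2^{i+3}}.
\]
Using $d_G(x',y') > (c-4)2^{i-1}$ in the second factor exactly as in Lemma~\ref{lem:full-stretch} gives $\frac{(c-4)2^{i-1}+2^{i+3}}{(c-4)2^{i-1}-2^{i+3}} = \frac{c+12}{c-20} = 1 + \frac{32}{c-20} \le 1 + \frac{64}{c}$ for $c\ge 40$; I'd either keep the constants loose like this or re-derive carefully to land on the stated $1+\tfrac{32}{c}$ (the numeric value of the constant in front of $1/c$ is not essential, so I would just pick bounds that make the arithmetic clean and note $c\ge 24$ suffices after absorbing constants). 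Finally, $(1+b)(1+\tfrac{32}{c}) = 1 + \tfrac{32}{c} + b + \tfrac{32b}{c} \le 1 + \tfrac{32}{c} + 6b$ since $b\le 1$ and $c\ge 24$ make $b + \tfrac{32b}{c} \le b(1+\tfrac{32}{24}) < 6b$ — actually $b(1+4/3) = \tfrac{7}{3}b < 6b$, comfortably.

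The only real subtlety — and the step I'd flag as the main obstacle — is making sure the $1+b$ factors do not compound along the $O(M)$ levels of the hierarchy walk. They do not: the walk from $x$ up to $x'$ is a \emph{single} path in $R$ whose length we bound by summing the spanner-lengths of the individual parent hops, and each hop contributes a factor $1+b$ to its \emph{own} (small, geometrically decreasing) true length, not to the accumulated length. So the total is $(1+b)$ times the telescoped sum $2\cdot 2^{i+1}$, not $(1+b)^{M}$ times anything. Once that is clear, the rest is the same triangle-inequality bookkeeping as Lemma~\ref{lem:full-stretch} with one global $(1+b)$ pulled out front, and the final inequality is routine algebra in $b$ and $c$.
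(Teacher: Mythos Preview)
Your approach is the paper's: climb to the minimal level $i$ at which the ancestors $x',y'$ become $c$-neighbors, use $(c-4)2^{i-1} < d_G(x',y') \le c2^i$, invoke the $(1+b)$ stretch guarantee on the pair $(x',y')$, and finish with the same triangle-inequality ratio. Your only slip is arithmetic in the vertical walk: the distance from $x$ to its $i$-level ancestor is less than $2\cdot 2^i = 2^{i+1}$, not $2\cdot 2^{i+1}$, so the combined contribution of the two sides is $2^{i+2}$ rather than $2^{i+3}$; with that fix your ratio becomes $(1+b)\tfrac{c+4}{c-12} \le (1+b)\bigl(1+\tfrac{32}{c}\bigr) \le 1+\tfrac{32}{c}+\tfrac{7}{3}b < 1+\tfrac{32}{c}+6b$ for $c\ge 24$, exactly the stated bound. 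The paper actually drops the $(1+b)$ factor on the vertical walks altogether and writes the ratio as $\tfrac{c+4}{c-12}+b\,\tfrac{c-4}{c-12}$; your more careful accounting of that factor is harmless, since the slack in the $6b$ term absorbs it.
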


\begin{proof}
The proof is similar to the proof of Lemma \ref{lem:full-stretch}.
Consider any two net-points $x,y$ in the bottom level $L$
of the hierarchy of $S$. Let $x',y'$ be their respective
$i$-level ancestors, where $i$ is the minimum value for
which $x',y'$ are $c$-neighbors:
$d_G(x',y') \le c2^i$.
As above,
$d_G(x',y') > (c-4)2^{i-1}$.
Now noting that by construction
$d_R(x',y') \le (1+b) d_G(x',y')$, 
and applying the triangle inequality,
we have
$\frac{d_R(x,y)}{d_G(x,y)}
\le \frac{d_R(x,x') + d_R(x',y') + d_R(y',y)}
	{-d_G{x,x'} + d_G(x',y') - d_G(y',y)}
< \frac{(1+b) d_G(x',y') + 2 \cdot 2^{i+1}}
	{d_G(x',y') - 2 \cdot 2^{i+1}}
< \frac{ (1+b) (c-4)2^{i-1} + 2^{i+2} }
	{ (c-4)2^{i-1} - 2^{i+2} }
= \frac{c+4}{c-12} + b\frac{c-4}{c-12}
< 1+\frac{32}{c} + 6b$.
\end{proof}

In closing this section, we note that
the stretch bounds of Lemmata \ref{lem:full-stretch} 
and \ref{lem:greedy-stretch} 
also hold for net-points of semi-hierarchies. Indeed, the
proofs use only the covering property of nets, and not 
their packing property.

\section{A light spanner for graphs with sparse spanning trees}\label{sec:sparse}

In this section, we show that graphs with sparse spanning trees admit
light low-stretch spanners. Note that we do not require the graph itself
to be sparse, merely that it admit a sparse spanning tree. 
In Section \ref{sec:pair}, we will show that we can build a light spanner
for the union of two low-stretch paths. In 
Section \ref{sec:spanning}, we will build on this result to
build a light spanner for all graphs with sparse spanning trees.

\subsection{A light spanner for pairs of close low-stretch paths}\label{sec:pair}

In this section we show that given a close pair $P,Q \subset G$ of low-stretch paths, 
we can compute a light low-stretch spanner for their union. We first need to define
nets, hierarchies and bipartite spanners for paths.

\noindent{\bf Path nets and hierarchies.}
Let $P \subset G$ be a path consisting of some vertices and edges in $G$.
An $r$-{\em path-net} for $P$ is an $r$-net for $P$ under the path distance function 
$d_P$. A {\em path-hierarchy} for $P$ is a full hierarchy under $d_P$,
with net points in levels $i = \lceil \log_2 w(P) \rceil$ 
and lower.

It is easy to see that an $r$-path-net of $P$ is an
$r$-semi-net for the point of $P$ under $d_G$:
We observe that the distance function
$d_P$ is non-contractive with respect to $d_G$;
$d_G(u,v) \le d_P(u,v)$ for all $u,v \in V$.
Hence, if a point is covered in the path-net, it
is covered under $d_G$ as well. However, the packing
property may not hold, so the path-net is only
a semi-net for $P$ under $d_G$.
It follows as well that the path-hierarchy for $P$ is a
semi-hierarchy for $P$ under $d_G$.

\noindent{\bf Bipartite path spanners.}
Let $G=(V_G,E_G)$ be a complete graph, and let $P=(V_P,E_P), Q=(V_Q,E_Q)$ 
be two paths in $G$ with stretch at most $(1+b_1)$
for some $1 \le b_1 \le c$.
Let each path be equipped with a net-hierarchy.
A {\em complete bipartite hierarchical spanner} $R$ for $P \cup Q$
contains all vertices of $V_P \cup V_Q$ and path edges 
$E_P \cup E_Q$ as well as the set of all edges 
$E \subset E_G$ connecting vertices of $V_P$ and $V_Q$. 
A {\em greedy biparite hierarchical spanner} $R$ for $P \cup Q$
contains all vertices of $V_P \cup V_Q$ and path edges 
$E_P \cup E_Q$, as well as an edge-set  $E' \subset E \subset E_G$: 
We sort the edges of $E$ in increasing order, and consider each
edge in turn. An edge of $E$ is added to $E'$ only
if the stretch between its endpoint on the current partial spanner
is large, say more than $(1 + b_1 + b_2)$ for some 
$0 < b_2 \le 1 - b_1$.
We show that the greedy bipartite spanner has favorable properties:

\begin{lemma}\label{lem:greedy-pair}
Let $P,Q \subset G$ be a pair of $(1+b_1)$-stretch
paths whose distance from each other is not greater than 
$c \cdot \min \{w(P),w(Q)\}$
(for $0 \le b_1 < 1$ and any $c \ge 24$).
Let $\Delta$ be the aspect ratio of $P \cup Q$. 
A greedy bipartite hierarchical spanner $R$ with 
parameters $c$ and $0 < b_2 \le 1-b_1$ 
for paths $P,Q$ can be constructed in time 
$O(cs (|V_P|+|V_Q|)\log (c\Delta))$,
and satisfies the following properties:
\begin{enumerate}
\item
Stretch:
$d_R(p,q) \le (1+\frac{32}{c} + 6(b_1 + b_2))d_G(p,q)$ 
for all $p \in P$ and $q \in Q$.
\item
Weight: 
$w(E') = \frac{12c^2s}{b_2} \cdot \min \{w(P),w(Q) \}$,
where $E' = E_R - (E_P \cup E_Q)$ is the set of new spanner edges not in $P$
or $Q$.
\end{enumerate}
\end{lemma}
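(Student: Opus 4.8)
The plan is to establish the two parts in turn, with the weight bound being the crux. Write $w=\min\{w(P),w(Q)\}$ and assume without loss of generality $w=w(P)$. The closeness hypothesis, together with the fact that $P$ and $Q$ have stretch at most $1+b_1\le 2$ (so $\diam_G(P)\le w(P)$ and likewise for $Q$), confines $P\cup Q$ to a ball of radius $O(cw)$. Recall that a path-hierarchy is a semi-hierarchy for the underlying point set under $d_G$, so Lemma~\ref{lem:greedy-stretch} applies to its net-points; a \emph{rung} will mean an edge of $E'=E_R\setminus(E_P\cup E_Q)$, and by construction a rung joins an $i$-level path-net-point of $P$ to a $c$-neighbor $i$-level path-net-point of $Q$, i.e.\ two such net-points at distance $<c2^i$.

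\textbf{Stretch.} I would follow the proof of Lemma~\ref{lem:greedy-stretch} almost verbatim. Fix $p\in P$, $q\in Q$, and let $i$ be the least level for which the $i$-level path-net ancestors $p'$ of $p$ and $q'$ of $q$ are $c$-neighbors. Such an $i$ exists and lies within the range of both path-hierarchies: $d_G(p',q')\le d_G(p,q)+d_G(p,p')+d_G(q,q')\le d_G(p,q)+2^{i+2}$, so taking $2^i\approx d_G(p,q)/c$ makes $p',q'$ $c$-neighbors, and $d_G(p,q)=O(cw)$ by the confinement above keeps $i\le\lceil\log w(P)\rceil$. The edge $(p',q')$ was a candidate rung, so the greedy rule gives $d_R(p',q')\le(1+b_1+b_2)d_G(p',q')$; moreover $d_R(p,p')\le d_P(p,p')<2^{i+1}$ (travel along $P$ to an $i$-level ancestor) and likewise $d_R(q,q')<2^{i+1}$, while minimality of $i$ gives $d_G(p',q')>(c-4)2^{i-1}$ exactly as in Lemma~\ref{lem:greedy-stretch}. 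Feeding these into the triangle inequality as in that lemma, with $b\mapsto b_1+b_2$, yields $d_R(p,q)/d_G(p,q)<\tfrac{c+4}{c-12}+(b_1+b_2)\tfrac{c-4}{c-12}<1+\tfrac{32}{c}+6(b_1+b_2)$ for $c\ge24$. For $p_1,p_2\in P$ the path $P$ itself gives stretch $1+b_1$, and analogously for $Q$, so the bound holds for all pairs in $R$.

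\textbf{Weight.} The heart of the argument is a separation property of the greedy rungs: for any two distinct rungs $r_1=(p_1,q_1)$ and $r_2=(p_2,q_2)$ in $E'$,
\[
d_P(p_1,p_2)+d_Q(q_1,q_2)\;>\;\tfrac{b_2}{2}\,\max\{w(r_1),w(r_2)\}.
\]
Indeed, suppose $w(r_2)\ge w(r_1)$ (so $r_2$ is considered later, and $r_1$ is already in $R$) and the inequality fails. The walk ``$p_2\rightsquigarrow p_1$ along $P$, rung $r_1$, $q_1\rightsquigarrow q_2$ along $Q$'' has length $d_P(p_1,p_2)+w(r_1)+d_Q(q_1,q_2)$; using $w(r_1)=d_G(p_1,q_1)\le d_P(p_1,p_2)+w(r_2)+d_Q(q_1,q_2)$ this is at most $w(r_2)+2\big(d_P(p_1,p_2)+d_Q(q_1,q_2)\big)\le(1+b_2)w(r_2)\le(1+b_1+b_2)\,d_G(p_2,q_2)$, so the stretch between $p_2$ and $q_2$ was already acceptable and $r_2$ would not have been added --- a contradiction. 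Given this, I would charge each rung $r$ of length $\ell_r$ (at level $i$ with $\ell_r\le c2^i$) to the ``rectangle'' $\{p''\in P: d_P(p'',p_r)<\tfrac{b_2\ell_r}{8}\}\times\{q''\in Q: d_Q(q'',q_r)<\tfrac{b_2\ell_r}{8}\}$ inside $P\times Q$; the separation property makes these rectangles pairwise disjoint. Binning rungs by dyadic length scale, those of length $\approx\ell$ live in the (by low stretch, essentially one-dimensional) ``diagonal band'' of $P\times Q$ where $d_G(p,q)\approx\ell$, whose extent along $P$ is $f(\ell):=d_P\text{-length of }\{p\in P: d_G(p,Q)\le\ell\}\le w$; together with the disjointness and the $c$-neighbor bound this gives $O\!\big(\tfrac{c\,f(\ell)}{b_2^2\ell}\big)$ rungs of that scale, of total weight $O\!\big(\tfrac{c\,f(\ell)}{b_2^{2}}\big)$. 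Summing over scales, and using $s$-sparsity to express the number of level-$i$ net-points and rung-endpoints inside a $d_G$-ball (replacing the special geometry of a straight pair of paths), yields $w(E')\le\tfrac{12c^2s}{b_2}\,w$.

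\textbf{Running time and main obstacle.} Building the two path-hierarchies and their $c$-neighbor lists takes $O\!\big((|V_P|+|V_Q|)\log(c\Delta)\big)$ time, since a path has constant doubling dimension; there are $O(c)$ candidate partners per net-point, hence $O\!\big(c(|V_P|+|V_Q|)\log(c\Delta)\big)$ candidate rungs, which we sort and process greedily, each stretch test being a bounded-radius reachability query touching only the $O(s)$ spanner edges the sparsity bound permits locally, for a total of $O\!\big(cs(|V_P|+|V_Q|)\log(c\Delta)\big)$. The step I expect to be the real obstacle is the last one of the weight analysis: turning the per-rung separation property into a bound \emph{linear} in $w$ rather than logarithmic in the aspect ratio, i.e.\ showing $\sum_{\ell}f(\ell)=O(\mathrm{poly}(c/b_2))\cdot w$ --- equivalently, that the rung lengths are effectively supported on $O(\log(c/b_2))$ dyadic scales. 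This is exactly where the conjunction of all three hypotheses (closeness, low path-stretch, $s$-sparsity) is essential: morally, once two close low-stretch paths have been stitched at some scale, greediness forbids re-stitching them at a much coarser scale, and this ``no large redundant rungs'' phenomenon --- which has no analogue for general doubling point sets --- is what makes a light bipartite spanner possible here.
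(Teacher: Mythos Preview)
Your stretch argument is essentially the paper's: invoke Lemma~\ref{lem:greedy-stretch} on the semi-hierarchy given by the two path-hierarchies, with $b=b_1+b_2$. Nothing to add there.

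The weight argument, however, has the gap you yourself flag, and it is a real one. Your separation inequality
\[
d_P(p_1,p_2)+d_Q(q_1,q_2)>\tfrac{b_2}{2}\max\{w(r_1),w(r_2)\}
\]
is correct, but it is a \emph{two}-dimensional separation in $P\times Q$, and the disjoint-rectangle charging that follows from it only controls $\sum_r w(r)^2$, not $\sum_r w(r)$. This is exactly why you end up summing $f(\ell)$ over dyadic scales and cannot avoid a $\log\Delta$ factor. The hoped-for collapse to $O(\log(c/b_2))$ scales does not follow from the hypotheses as stated; there is nothing preventing rungs from appearing at every scale along $P$, provided their $Q$-endpoints drift apart.

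The paper sidesteps this entirely by proving a \emph{one}-dimensional exclusion on $P$ alone. The missing observation is the packing property of the path-net: if $p$ is the $P$-endpoint of an $i$-level rung $(p,q)$ and $\tilde p\neq p$ satisfies $d_P(p,\tilde p)\le r:=\tfrac{b_2}{4c}\,d_G(p,q)$, then $\tilde p$ can be a path-net point only at levels $j\le\log r$ (since $p$ is itself an $i$-level net point and $2^i\ge d_G(p,q)/c>r$). Consequently any rung out of $\tilde p$ has length at most $c2^j\le cr$, so its $Q$-endpoint $\tilde q$ is within $cr$ of $p$; routing $p\to\tilde p\to\tilde q\to q$ then gives stretch below $1+b_1+b_2$, contradicting the addition of $(p,q)$. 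Thus the $d_P$-interval of radius $r$ around $p$ carries \emph{no other rung endpoints at all}. Charging each rung to this interval on $P$ (with a multiplicity of at most $3cs$ coming from the number of $i$-level path-net points of $Q$ within $c2^i$, via $s$-sparsity) yields $w(E')\le \tfrac{12c^2s}{b_2}\,w(P)$ directly, with no sum over scales. The point you were missing is that the hierarchical structure of the candidate rungs---not just greediness---forces nearby rung endpoints on $P$ to sit at low levels and hence to be short.
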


\begin{proof}
\noindent {\bf (i)} 
This follows immediate from Lemma \ref{lem:greedy-stretch}.

\noindent {\bf (ii)} 
The proof is by a charging argument. We assume without loss of 
generality that $w(P) \le w(Q)$.
Take the maximum level $i$ for which $S$ has an $i$-level edge, 
and let the endpoints of the edge be $p \in P$ and $q \in Q$. 
Consider any point $\tilde{p} \in P$ satisfying
$d_P(p,\tilde{p}) \le r$ for $r=\frac{b_2}{4c}d_G(p,q)$,
and we will show that $\tilde{p}$
cannot have {\em any} edge incident to $Q$ in $E_R$:

First note that the proximity of $p$ to $\tilde{p}$ implies that
$\tilde{p}$ may be a $j$-level net-point only for values
$j \le \log r$. 
Now suppose by way of contraction that $\tilde{p} \in P$ has an edge to 
some $\tilde{q} \in Q$, and then by construction in must be that
$d_G(\tilde{p},\tilde{q}) \le c2^j \le cr$.
Let $R'$ be the partial spanner before the addition of the $i$-level
edges. Then the stretch guarantees of the paths and spanner, along
with applications of the triangle inequality, imply that
$
\frac{d_{R'}(p,q)}
	{d_G(p,q)}
\le	\frac{d_{R'}(p,\tilde{p}) + d_{R'}(\tilde{p},\tilde{q}) + d_{R'}(\tilde{q},q)}
	{d_G(p,q)}
\le	\frac{r + cr + (1+b_1)d_G(\tilde{q},q)}
	{d_G(p,q)}
\le	\frac{r + cr + (1+b_1)[d_G(p,q) + d_G(p,\tilde{p}) + d_G(\tilde{p},\tilde{q})]}
	{d_G(p,q)}
\le	\frac{(2+b_1)(r+cr) 
	+ (1+b_1)d_G(p,q)}
	{d_G(p,q)}
=	1+b_1 + \frac{(2+b_1)(r+cr)}{d_G(p,q)}
<	1+b_1 + \frac{4cr}{d_G(p,q)}
=	1+b_1+b_2.$
This implies that in $R'$ the stretch from $p$ to $q$ is not sufficient to
add to $E'$ an edge connecting them -- a contradiction.

It follows that for any $i$-level edge $e \in E'$ connecting $p \in P$ to $q \in Q$,
the entire path of $P$ within distance $\frac{b_2}{4c}w(e)$ of $p$ under $d_P$ 
has no lower level edges incident upon it. We charge the $i$-level edge to this 
segment of the path $P$. 

The statement now follows by noting that $p$ may have at most $3cs$ 
$i$-level edges incident upon it (all of which will be charged to the
same path segment of $P$): This is because all relevant
$i$-level path-net points of $Q$ within distance $c2^i$ of $p$.
Since the stretch in $Q$ is less than 2, a path connecting all these
points mush lie fully within distance $3c2^i$ of $p$, and by the 
the sparsity guarantee its length is less than $3cs2^i$. There can 
be at most $3cs$ $i$-level path-net points on a path of this length.


\noindent {\bf Runtime.} 
At each of $O(\log (c \Delta))$ levels $i$, we must locate for each
$i$-level path-net point $p \in P$ its $i$-level path-net 
$c$-neighbors in $Q$. As shown above, there are most $3cs$ such neighbors. 
Further, if two points $p \in P$ and $q \in Q$ are $i$-level $c$-neighbors, then so
are their respective parents $p' \in P$ and $q' \in Q$:
$d_G(p',q') 
\le d_G(p,p') + d_G(p,q) + d_G(q,q')
\le 2 \cdot 2^{i+1} + c2^i
= (\frac{c}{2} + 2)2^i+1
< c2^{i+1}$. 
So to compute all $c$-neighbors, it suffices to iterate down the hierarchy,
maintaining for each $i$-level net-point $p \in P$ a list of all $c$-neighbors 
in $Q$. The lists for the children of $p$ can be found by considering all children
of $p$'s $c$-neighbors, for a total of $O(cs)$ candidates.
\end{proof}

\subsection{Extension to graphs with sparse spanning trees}\label{sec:spanning}

The results in the previous section apply only to pairs of low-stretch paths. 
We will use them to obtain similar results for all graphs with sparse spanning trees.
Our plan is as follows: We first show how to {\em decompose} a spanning tree into 
paths without a stretch guarantee (Lemma \ref{lem:decomp}).
We then show how to {\em replace} any path with a light set of 
low-stretch paths (Lemma \ref{lem:replace}). Finally, we build the bipartite spanner
of the previous section on all pairs of low-stretch paths, and show that this gives
a light low-stretch spanner tree for the full set (Theorem \ref{thm:sparse}).
To do this, we will need to demonstrate that the union of all the low-stretch paths 
is still relatively sparse, and also that the union of the path-nets for these paths
can serve as a semi-hierarchy for the entire space.

\begin{lemma}\label{lem:decomp}
Given an $s$-sparse spanning tree $T = (V_T,E_T) \subset G$ of $n$ nodes with aspect 
ratio $\Delta$, $T$ may be decomposed in time $O(s |V_T| \log \Delta)$ into a set
$\cal{Q}$ of paths, with the following properties:
\begin{enumerate}
\item
Diameter:
At least one path $P_i \in \cal{Q}$ has length $\diam_T(T)$.
\item
Proximity to path-net points:
Every vertex in $V_G$ is within distance $b$ (under $d_T$)
of some path $P_j \in \cal{Q}$ of length $w(P_j) \ge b$, 
for all values $0 < b \le \diam_T(T)$ .
\end{enumerate}
\end{lemma}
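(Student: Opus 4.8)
The plan is to decompose the tree $T$ greedily from the top of a path-hierarchy for $T$ (taken under $d_T$), peeling off long root-to-leaf–style paths one level at a time, so that the resulting paths form a laminar-style cover with the two required properties. Concretely, I would root $T$ arbitrarily and, at each scale $2^i$ descending from $\lceil\log_2 w(T)\rceil$ down to $0$, process the subtrees that remain uncovered: in each such subtree pick a longest leaf-to-leaf (or root-to-leaf) path, cut it out as one path $P\in\mathcal{Q}$, and recurse on the hanging subtrees. Maintaining the set of "active" subtree roots and their diameters across the $O(\log\Delta)$ scales, with $O(s)$ work per active root per scale via the sparsity bound, gives the claimed $O(s|V_T|\log\Delta)$ running time; Lemma~\ref{lem:metric-pack}-style packing controls how many active roots can coexist at any scale.

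For property~(i), the very first path extracted is a longest path in $T$, which by definition has length $\diam_T(T)$, so it suffices to ensure the greedy step at the top scale selects a diametral path — a longest leaf-to-leaf path through the root realizes $\diam_T(T)$ in a tree, so this is immediate. For property~(ii), fix $0<b\le \diam_T(T)$ and any vertex $v\in V_G$; I want to exhibit a path $P_j\in\mathcal{Q}$ with $w(P_j)\ge b$ and $d_T(v,P_j)<b$. The idea is that $v$ lies in some subtree that was still active at the scale $2^i$ with $2^i$ just below $b$, and the path peeled off at that scale from the subtree containing $v$ has length $\ge$ (the diameter of that subtree), which is $\ge 2^i$ by the stopping rule, while $v$'s distance to that path is bounded by the diameter of the residual subtree hanging below it, which we have arranged to be $<b$. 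So the key structural invariant to maintain is: after processing scale $2^i$, every remaining active subtree has $d_T$-diameter less than $c\cdot 2^i$ for a suitable constant, and every extracted path at scale $2^i$ has length at least $2^i$; property~(ii) then follows by choosing $i$ with $2^i \le b < 2^{i+1}$ (adjusting constants).

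The main obstacle I anticipate is calibrating the constants so that both bounds in property~(ii) — the lower bound $w(P_j)\ge b$ on the chosen path and the upper bound $d_T(v,P_j)<b$ on the distance — hold \emph{simultaneously} for the \emph{same} path and the \emph{same} value of $b$. These pull in opposite directions: making the peeled paths long forces us to stop the recursion late (large residual subtrees, hence larger distance from $v$), while keeping residual subtrees small forces short peeled paths. The resolution is to peel the longest available path at each active subtree and recurse only on the strict subtrees that hang off it; the longest path has length at least the radius of its subtree, and the hanging subtrees have strictly smaller diameter, so after $O(\log\Delta)$ rounds everything is covered, and for the target value $b$ one tracks the round at which $v$'s subtree had diameter in the right dyadic window. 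A secondary, more routine obstacle is bounding the number of paths and the per-round work: here the $s$-sparsity of $T$ (applied to balls under $d_T$ around the cut vertices) limits both the number of children hanging off any peeled path at a given scale and the total length processed per scale, yielding the stated time bound.
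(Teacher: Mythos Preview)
Your algorithm---repeatedly peel off a longest path and recurse on the hanging subtrees---is exactly the paper's. But you have wrapped it in a dyadic-scale framework that is unnecessary and has led you to manufacture an obstacle that does not exist.

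The paper's proof of (ii) is a one-line observation with no constants to calibrate: in a tree, the longest path \emph{equals} the diameter. So as you follow a fixed vertex $v$ down the recursion, the lengths of the paths removed from its successive subtrees are strictly decreasing (each subtree's diameter, hence its longest path, is at most the parent's). For a given $b$, look at the first step where the removed path has length $\le b$; the subtree at that moment has diameter $\le b$, so $v$ is within $d_T$-distance $b$ of the path removed at the \emph{previous} step, which by choice has length $>b$. That is the whole argument---no scales, no window $2^i\le b<2^{i+1}$, no tension between ``long enough path'' and ``close enough to $v$''. Your worry that ``these pull in opposite directions'' comes from weakening ``longest path $=$ diameter'' to ``longest path $\ge$ radius''; restore the equality and the tension vanishes.

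For the runtime, invoking Lemma~\ref{lem:metric-pack} is a mistake: the tree metric $d_T$ need not be doubling, and you are given no bound on $\ddim(T)$. The paper uses only the $s$-sparsity hypothesis: after removing a path of length $W$, all hanging subtrees lie in a $d_T$-ball of radius $O(W)$, whose total edge weight is $O(sW)$; hence at most $O(s)$ of them can contain a path of length $>W/2$, so along any chain in the recursion the diameter halves every $O(s)$ steps, giving recursion depth $O(s\log\Delta)$ and total work $O(s|V_T|\log\Delta)$. Replace your packing appeal with this sparsity argument.
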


\begin{proof}
The decomposition procedure removes from the spanning tree the longest path --
of length $\diam(P)$ -- and places the path in the collection $\cal{Q}$.
That is, the edges of the path are removed from $E_T$, and then all vertices
with no edges incident upon them are removed from $V_T$. After the 
removal of the longest path, a number of disjoint subtrees may remain,
and each is decomposed recursively until $V_T$ is empty.
This completes the description of the decomposition procedure.

We note trivially that the distance under $d_T$ from a removed path to all 
vertices in one of the remaining spanning trees is not greater than the
graph diameter of that spanning tree.

\noindent {\bf (i)} 
The first item follows by construction. 

\noindent {\bf (ii)} 
Take any vertex
and the paths that were removed from the spanning tree containing the vertex
at each iteration. The paths are necessarily in decreasing order of length.
Consider the first
path of length at most $b$. Then the diameter of the subtree that 
contained this path before its removal was at most $b$, 
hence the the vertex is within distance 
$b$ of the next path, which has length greater than $b$.

\noindent {\bf Runtime.} 
First note trivially that longest path of a tree can be 
found in $O(n)$ time. Now consider any subtree formed by the removal of a
path $P$ of length $w(P)$. After $O(s)$ further decomposition steps, 
the longest path in any remaining subtrees cannot be longer than
$\frac{w(P)}{2}$. This is because each decomposition step removes the longest
path of the current subtree, and leaves behind it a group of subtrees
all within distance $w(P)$ of $P$. Then the fact that the 
original subtree is $s$-sparse implies that it may contain at most
$O(s)$ paths of length greater than $\frac{W}{2}$. It follows that
the total runtime of the decomposition procedure is $O(s |V_P| \log \Delta)$.
\end{proof}

It remains to replace each of the paths in $\cal{Q}$ with a set of low-stretch paths.
Consider the following procedure (motivated by the greedy algorithm)
which takes a path $P$ and value $c$, and replaces $P$ with a set of 
low-stretch paths $\cal{P}$.
First consider the endpoint $p,q \in P$. If the endpoints are close together -- 
$d_P(p,q) \le \frac{1}{3} \diam_G(P)$ -- find a point $r$ satisfying
$d_G(p,r), d_G(q,r) \ge \frac{1}{3} \diam_G(P)$, segment $P$ at
$r$, and continue the procedure below separately on the two smaller paths. 
Otherwise, continue the procedure on $P$ itself:

Iteratively create a path-net for $P$, at each iteration 
$i= L,L+1,\ldots$ promoting some of the $(i-1)$-level path-net points to also be $i$-level
path-net points: We begin by promoting the first point in $P$ -- call
this $p$ -- and proceeding down the path promote in turn 
every $(i-1)$-level point at path distance $2^i$ or greater from the
previously promoted point. 

Then beginning again at $p$, we search down the path for the {\em associate} of $p$, 
the first $i$-level path-net point $q$ satisfying $d_G(p,q) \ge c2^i$.
Now, if the stretch $\frac{d_P(p,q)}{d_G(p,q)}$ is greater than 
$1+\frac{1}{3c(c+1)s}$, 
remove the partial path connecting $p,q$, and replace it by a directed edge. 
The partial path is then segmented into smaller paths at its
$(i-2)$-level path-net points, and these small paths are all added to $\cal{P}$.
If the partial path is removed, the procedure continues with point $q$, 
searching down the path for its associate. 
If the partial path is not removed, then the procedure takes the first
$i$-level path-net point $r$ following $p$, and searches down the path for
$r$'s associate. Iteration $i$ terminates upon reaching an $i$-level
path-net point $r$ with no associate. In this case, we locate the final
$i$-level path-net point in $P$ -- call this $q$ -- and remove the partial
path connecting $p$ and $q$, replacing it by a direct edge. The
removed partial path, as well as the path beyond $q$, are decomposed into
smaller paths at their $(i-2)$-level path-net points, and added to $\cal{P}$.
The procedure terminates upon reaching a level $i$ for which the endpoint
$p$ has no associate.

Lemma \ref{lem:replace} below shows that the final path set $\cal{P}$
has favorable properties, which we will use in the construction
of the spanner.

\begin{lemma}\label{lem:replace}
Let $P=(V_P,E_P)$ be an $s$-sparse path in $G$ with arbitrary stretch
and aspect ratio $\Delta$.
For any constant $c \ge 24$, the collection $\cal{P}$ may be computed in time
$O(cs |V_p| \log \Delta)$, and possesses the following properties:
\begin{enumerate}
\item
Vertex cover: 
The union of vertices in all sets $P_i \in \cal{P}$ is exactly $V_P$
($\cup_i V_{P_i} = V_P$), and 
$\sum_i |V_{P_i}| \le 3|V_P|$.
\item
Stretch: 
Each path in $\cal{P}$ has stretch at most $(1+\frac{32}{c})$.
\item
Diameter: At least one path $P_i \in \cal{P}$ has length 
$\frac{\diam_G(P)}{4} \ge \frac{\diam_P(P)}{4s}$ 
or greater.
\item
Proximity to path-net points:
Every vertex $v \in V_P$ of is within $d_G$ distance
$16(c+1)b$ of some path $P_j \in \cal{P}$ of weight 
$w(P_j) \ge b$, where $b$ is any value satisfying 
$0 < b \le \frac{\diam_G(P)}{4}$.
\item
Path sparsity:
Each path in $\cal{P}$ is $3s$-sparse.
\item
Graph sparsify and weight:
The union of all paths $P_i \in \cal{P}$
form a graph that is both $3s(3cs+1)$-sparse and has total weight 
at most $3s(3c(c+1)s+1) \cdot w(P)$.
\end{enumerate}
\end{lemma}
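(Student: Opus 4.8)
The plan is to analyze the replacement procedure level by level, tracking how partial paths are removed, replaced by direct edges, and re-segmented at $(i-2)$-level path-net points. The six properties naturally split into three groups, and I would attack them in the order they build on one another.

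\emph{Vertex cover, path sparsity, and stretch (items 1, 5, 2).} For item~1, observe that each vertex of $V_P$ lies on exactly one path of $\cal{P}$ \emph{except} possibly the $(i-2)$-level path-net endpoints at which partial paths are cut: such a point can be shared between at most two adjacent sub-paths, and a point can serve as a cut-point at only $O(1)$ levels because once a partial path through it is removed it is not revisited. Summing the $\le 2\times$ overcount (plus the initial segmentation at the point $r$ when the endpoints are close) gives $\sum_i|V_{P_i}|\le 3|V_P|$. For item~5, a sub-path $P_i\in\cal{P}$ is obtained by cutting a piece of the original $s$-sparse path $P$ at consecutive $(i-2)$-level net-points; any ball $B(v,\rho)$ restricted to $P_i$ is contained in a ball of radius $\rho+O(2^{i-2})$ restricted to $P$, and since $P_i$ spans at most $O(2^{i-2})$ in $d_G$ between cut-points, the extra slack is absorbed, yielding $3s$-sparsity (the factor $3$ coming from the at-most-threefold blow-up of the radius). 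For item~2, each path in $\cal{P}$ is a contiguous piece of $P$ whose \emph{endpoints} are $i$-level path-net points that were never the endpoints of a removed (large-stretch) partial path, so consecutive such net-points along the piece have stretch at most $1+\tfrac{1}{3c(c+1)s}$ by the stopping rule; chaining these local stretch bounds along the piece via the triangle inequality (as in Lemma~\ref{lem:greedy-stretch}, using only the covering property) gives overall stretch $1+\tfrac{32}{c}$. The constant here is forced by the same net-hierarchy telescoping argument already used twice in Section~\ref{sec:background}.

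\emph{Diameter and proximity (items 3, 4).} Item~3 follows from the initial step: if the endpoints $p,q$ of $P$ are close we split at a point $r$ with $d_G(p,r),d_G(q,r)\ge\tfrac13\diam_G(P)$, so one of the two resulting paths has $d_G$-diameter at least $\tfrac13\diam_G(P)$, and the final long direct-edge replacement preserves a sub-path of length at least a constant fraction of this; bookkeeping the constants gives $\tfrac{\diam_G(P)}{4}$, and $\diam_G(P)\ge\diam_P(P)/s$ since $P$ is $s$-sparse (its total length, hence path-diameter, is at most $s\cdot\diam_G(P)$). Item~4 is the analogue of Lemma~\ref{lem:decomp}(ii): for a vertex $v$ and a scale $b$, run the procedure until the level $i$ with $2^i\approx b$; the partial path containing $v$ at that level either survives (so $v$ is \emph{on} a path of $\cal{P}$ of the right length, distance $0$) or was removed and replaced by a direct edge spanning $d_G$-distance $\ge c2^i$, in which case $v$ is within $d_G$-distance $O(c\cdot 2^i)=O(cb)$ of an endpoint of that removed piece, and that endpoint lies on a neighbouring path of length $\ge b$; tracking the loss through one more level of re-segmentation at $(i-2)$-points yields the stated $16(c+1)b$.

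\emph{Graph sparsity and weight (item 6).} This is the crux and where I expect the real work. The union $\bigcup_i P_i$ consists of the original path edges of $P$ that survived, \emph{plus} one direct ``shortcut'' edge for each removed partial path. I would bound the total weight of the shortcut edges by a charging argument essentially dual to the one in Lemma~\ref{lem:greedy-pair}(ii): an $i$-level shortcut edge $e$ has $w(e)\ge c2^i$, and the condition that triggered its removal was stretch exceeding $1+\tfrac{1}{3c(c+1)s}$, meaning the original partial path it replaced had length at least $\bigl(1+\tfrac{1}{3c(c+1)s}\bigr)w(e)$, i.e.\ an \emph{excess} of at least $\tfrac{1}{3c(c+1)s}w(e)$ over the direct distance; I charge $e$ to that excess length of $P$, and since these partial paths at a fixed level are edge-disjoint and each vertex of $P$ can be charged at only $O(1)$ levels, the total shortcut weight is $O(c(c+1)s)\cdot w(P)$, giving the $3s(3c(c+1)s+1)$ bound after accounting for the re-segmentation multiplicity. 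For $3s(3cs+1)$-sparsity, fix $v$ and $\rho$ and count: the surviving original edges in $B^*(v,\rho)$ weigh at most $3s\rho$ by item~5 applied to each sub-path (their union being no denser than a constant number of $3s$-sparse paths overlapping at cut-points — this overlap is what must be handled carefully), and the shortcut edges in $B^*(v,\rho)$ number at most $3cs$ per relevant level with geometrically decreasing lengths summing to $O(cs\rho)$, because at a given level $i$ only shortcuts with $c2^i\lesssim\rho$ can meet the ball and, as in Lemma~\ref{lem:greedy-pair}(ii), at most $3cs$ $i$-level net-points fit in the relevant segment of $P$. The main obstacle is precisely making the ``union of several $3s$-sparse sub-paths is $3s(3cs+1)$-sparse'' step rigorous: naively summing sparsities over sub-paths overcounts, so I would instead argue directly on $P$ — the surviving edges form a \emph{sub-forest} of the $s$-sparse path, hence are themselves $s$-sparse, and only the shortcut edges contribute the extra $3cs$ factor — which keeps the constant under control.
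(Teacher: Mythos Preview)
Your outline is broadly aligned with the paper's strategy, but there are real gaps in items~5, 2, and~6, and they share a common source: the paths in $\mathcal{P}$ are \emph{not} all contiguous sub-paths of the original $P$. When a partial path between level-$i$ associates is removed, it is a piece of the \emph{current} (already modified) path and may contain shortcut edges inserted at levels $j<i$; the final surviving path certainly contains such shortcuts. Your argument for item~5 (``$P_i$ is obtained by cutting a piece of the original $s$-sparse path $P$ at consecutive $(i-2)$-level net-points'') therefore does not apply. The paper instead argues that every shortcut edge $(p,q)$ replaced a heavier piece of the original path lying entirely within $d_G(p,q)$ of both endpoints; hence any ball of radius $r$ containing the shortcut contains the replaced material within radius $3r$, and $3s$-sparsity follows from $s$-sparsity of $P$.

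More importantly, your stretch argument for item~2 does not go through. You cannot simply ``chain'' the associate-stretch bound $1+\tfrac{1}{3c(c+1)s}$ along consecutive net-points and then invoke Lemma~\ref{lem:greedy-stretch}: that lemma requires a stretch bound on \emph{every} $c$-neighbour pair at every level, whereas the procedure only tests associate pairs (and chaining would compound the constant). The paper's argument runs in the opposite order from yours: it first establishes the $3s$-sparsity of item~5, and then uses it to show that if an $(i{+}1)$-level associate pair $r,t$ straddling $p,q$ has stretch below the threshold, then the $i$-level pair $p,q$ itself has stretch at most $1+\tfrac{1}{c}$---because $d_{P'}(r,t)\le 3s(c+1)2^{i+1}$ by sparsity, so the permitted slack $\tfrac{1}{3c(c+1)s}d_G(r,t)$ is small relative to $d_G(p,q)$. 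The factor $s$ in the threshold $\tfrac{1}{3c(c+1)s}$ exists precisely to absorb this sparsity factor; this is the mechanism you are missing.

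Finally, your weight argument for item~6 (charge each shortcut to the ``excess'' of its removed partial path, then claim each vertex is charged at $O(1)$ levels) breaks because removed partial paths across levels are nested, not disjoint: the level-$(i{+}1)$ removed path may contain a level-$i$ shortcut edge rather than original edges of $P$. The paper handles this with a recursive charging scheme---proportionally distribute the cost of each shortcut over the edges it replaced, and if one of those is itself a shortcut, push its share further down---so that the total charge landing on any original edge $e$ is at most $w(e)\sum_{k\ge 0}(1+\tfrac{1}{3c(c+1)s})^{-k}<(3c(c+1)s+1)\,w(e)$.
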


\begin{proof}

\noindent {\bf (i)} 
This covering immediate from the construction, and the number 
of vertices follows by noting that a removed path may duplicate
the path endpoints, but removes at least a single vertex from
the original path.

\noindent {\bf (ii)} 
Let $p,q$ be any pair of $i$-level path-net points in a path 
$P' \in \cal{P}$, and we show that 
the path stretch of $p,q$ is less than $1+\frac{1}{c}$.
Then the item will follow from Lemma \ref{lem:greedy-stretch}.

Since $p,q$ are found on the same path, it must be that 
all $(i+1)$-level associate pairs $r,t$ straddling the 
path segment from $p$ to $q$ have path stretch less than
$1+\frac{1}{3c(c+1)s}$, as otherwise the path connecting $r$ and $t$
would have been removed and segmented along its $(i-1)$-level 
path-net points, and $p,q$ could not be in the same path
segment in $\cal{P}$. Recall by construction the $d_G$-distance from $r$
to $t$ is in the range $[c2^{i+1},(c+1)2^{i+1})$, and since
all paths of $\cal{P}$ are $3s$-sparse (item (iv) of 
the lemma), the $d_{P'}$ distance 
from $r$ to $t$ is at most $3s \cdot (c+1)2^{i+1}$. 
Were the pair $p,q$ to have path stretch at least $1+\frac{1}{c}$,
then $r,t$ would have path stretch at least 
$ 	\frac{d_{P'}(r,t)}{d_G(r,t)}
= 	\frac{d_{P'}(r,p) + d_{P'}(r,t) + d_{P'}(t,p)}{d_G(r,t)}
\ge 	\frac{d_G(r,p) + (1+\frac{1}{c})d_G(r,t) + d_{P'}(t,p)}{d_G(r,t)}
\ge 	\frac{d_G(r,t) + \frac{2^i}{c})d_G(r,t)}{d_G(r,t)}
\ge	\frac{3s \cdot (c+1)2^{i+1} + 2^i/c}{3s \cdot (c+1)2^{i+1}}
= 	1 + \frac{1}{3c(c+1)s}$,
which is a contradiction.

\noindent {\bf (iii)} 
The endpoints $p,q \in P$ were chosen so that $\diam_P(P) \ge \frac{\diam_G(P)}{3}$.
By construction, the final path remaining after all iterations retains $p$ as
an endpoint. On the other end, at each iteration $i$ a small end-segment of the path
may be segmented, but these account for less than a fraction
$\frac{1}{c}\sum_{i=0}^\infty 2^{-i} = \frac{2}{c} \le \frac{1}{12}$ of the total
path length. So the diameter is at least 
$\frac{\diam_G(P)}{3} \cdot \frac{11}{12} > \frac{\diam_G(P)}{4}$.
Finally, the $s$-sparsity of path $P$ directly implies that
$\diam_P(P) \le s\diam_G(P)$.

\noindent {\bf (iv)} 
For every added edge connecting $i$-level path-net points $p,q$, 
the removed path was all within distance $c2^i$ of $p$ (under $d_G$)
of the new edge connecting $p,q$, which have length in the range 
$d_G(p,q) \in [c2^i,(c+1)2^i)$.
Now the removed path may itself have had paths removed from
it at earlier stages of the iteration, but all these
paths must be within distance
$\sum_{j=0}^i c2^j < c2^{i+1} < 2d_G(p,q)$ of the new edge.
Also note that when this path was removed, it was decomposed into 
smaller paths of path-length at least 
$2^{i-2} = \frac{2^{i+1}}{8}$.
Hence, the length of the edges feature jumps of at most
$8(c+1)$, and so every vertex is within distance 
$16(c+1)b$ of an edge of length at least $b$.

\noindent {\bf (v)} 
Take an edge added by the procedure between points $p,q$. 
Then in the original path $P$ there must have been a heavier
path connecting $p,q$, and by the arguments in item (iv) 
this path was fully within distance
$d_G(p,q)$ of both $p$ and $q$. Any ball containing the new 
edge must be of radius at least $\frac{d_G(p,q)}{2}$, and so
it contains the entire (removed) old path within 3 times its 
radius. So the weight of the smaller ball is less than the weight
of the larger larger ball, and the sparsity is $3s$.

\noindent {\bf (vi)}
For the claim of weight:
We have already noted in item (iii) that the edges added when removing 
an end-segment sum to much less than the weight of the path.
For the other edges, the proof proceeds by a charging argument. 
For an edge $e$ added
between points $p,q$ in iteration $i$, charge its weight to the 
edges of the removed path. Each edge $e'$ in the removed partial
path $Q$ receives change $w(e) \frac{w(e'}{w(R)}$, and since $Q$
must have weight at least $w(e)(1+\frac{1}{3c(c+1)s})$, the charge to $e'$
is at most $\frac{w(e')}{1+\frac{1}{3c(c+1)s}}$. Now, if $e'$ itself
was added in an earlier iteration $j$, then the new charge to $e'$ 
due to the addition of $e$ is itself passed down to earlier removed
paths, until all charges lie solely on edges present in the original
path $P$. Let $e$ be an edge of $P$, and is follows that the sum
of charges placed on $e$ (included the cost of $e$ itself) is less than
$w(e) \sum_{i=0}^\infty (1+\frac{1}{3c(c+1)s})^{-i} < (3c(c+1)s+1)w(e)$.

For the claim of sparsity:
It follows from the proof of items (iii) and (iv) that all edges 
of $\cal{P}$ inside an $r$-radius ball charge to edges of the 
original path $P$ within distance $3r$ of the ball's center. 
As $P$ is $s$-sparse, the total weight in the small ball is
$3s(3c(c+1)s+1)$, and the total weight of all edges in $\cal{P}$ is
$3s(3c(c+1)s+1) \cdot w(P)$.

\noindent{\bf Runtime.}
There are $O(\log \Delta)$ iterations, and at each one we can make a single pass
on the path, promoting net-points and recording their order in a list.
Then we must discover for each $i$-level path-net point $p$ its 
$c$-neighbors. But these are all within path distance $O(cs2^i)$ of 
$p$, so it suffices to inspect only the next $O(cs)$ net-points
in the list. The runtime follows.
\end{proof}

Joining together Lemmata \ref{lem:greedy-pair}, \ref{lem:decomp}
and \ref{lem:replace}, we can prove that spaces with sparse spanning
trees admit light low-stretch spanners:

\begin{theorem}\label{thm:sparse}
Let $T = (V_T,E_T)$ be an $s$-sparse spanning tree of a complete graph $G$
with aspect ratio $\Delta$. 
Then in time $(s/\eps)^{O(1)} \cdot |V_T| \log \Delta$ 
we can construct for $G$ a spanner $R$ with
\begin{enumerate}
\item
Weight: $W_s \cdot w(T)$ for $W_s = (s/\eps)^{O(1)}$.
\item
Stretch: $(1+\eps)$.
\end{enumerate}
\end{theorem}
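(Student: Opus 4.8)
The plan is to combine the three preceding lemmata as advertised in the paragraph before the theorem. First I would apply Lemma~\ref{lem:decomp} to the $s$-sparse spanning tree $T$, obtaining a family $\mathcal{Q}$ of (not necessarily low-stretch) paths that together cover all vertices and whose total weight is exactly $w(T)$ (since the decomposition just partitions the edges of $T$). Then, to each path $P \in \mathcal{Q}$ I would apply the replacement procedure and Lemma~\ref{lem:replace}, producing for each $P$ a collection $\mathcal{P}$ of $(1+\tfrac{32}{c})$-stretch, $3s$-sparse paths whose union is a graph of weight at most $3s(3c(c+1)s+1)\,w(P)$ and sparsity $3s(3cs+1)$; summing over $P \in \mathcal{Q}$ gives a global collection of low-stretch paths whose union has weight $O(s^3 c^2)\,w(T)$ and whose union is still $O(s^2 c)$-sparse.

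The edge set of the spanner $R$ will be: all the path edges (the union of all the low-stretch paths just produced), plus, for every \emph{pair} of these low-stretch paths that are sufficiently close to each other, the edge set $E'$ of a greedy bipartite hierarchical spanner from Lemma~\ref{lem:greedy-pair} with parameters $c$ and a suitable $b_2$. The choice $c = \Theta(1/\eps)$ and $b_1 = \tfrac{32}{c}$, $b_2 = \Theta(\eps)$ makes the per-pair stretch $1+O(\eps)$; rescaling $\eps$ by a constant then yields stretch exactly $1+\eps$. For the weight of the bipartite edges, Lemma~\ref{lem:greedy-pair}(ii) charges each pair $\Theta(c^2 s / b_2)\cdot \min\{w(P),w(Q)\}$, i.e.\ $\Theta(c^2 s/\eps)$ times the weight of the lighter path; the key point is that by the proximity guarantees (Lemma~\ref{lem:decomp}(ii) and Lemma~\ref{lem:replace}(iv)) each low-stretch path of weight $w$ is close only to $O(1)$ paths of comparable-or-larger weight at each ``scale'', and a geometric-series / packing argument shows the total over all pairs is $(s/\eps)^{O(1)}\cdot w(T)$. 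This gives the claimed weight bound $W_s = (s/\eps)^{O(1)}$.

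It remains to verify the $(1+\eps)$-stretch for \emph{arbitrary} pairs $u,v \in V_G$, not just pairs lying on a common path or a connected pair of paths. Here the plan is to invoke the hierarchical-spanner machinery of Section~\ref{sec:background}: the union of all the path-nets (path-hierarchies) of the low-stretch paths forms, under $d_G$, a semi-hierarchy for $S$ (this uses the observation in Section~\ref{sec:pair} that a path-net is a $d_G$-semi-net, together with the covering/diameter guarantees so that every point of $S$ has an ancestor at every level). Then Lemma~\ref{lem:greedy-stretch} (which, as noted at the end of Section~\ref{sec:background}, holds for semi-hierarchies) reduces global $(1+\eps)$-stretch to showing that every pair of $c$-neighbor net-points at every level has $(1+O(\eps))$-stretch in $R$. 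Such a $c$-neighbor pair $x,y$ at level $i$ either lies on a common low-stretch path — handled by the path's intrinsic stretch and its net edges — or lies on two \emph{distinct} low-stretch paths that, being within $c2^i$ of each other while each has weight $\gtrsim 2^i$ (they carry $i$-level net-points), are a ``close pair'' in the sense of Lemma~\ref{lem:greedy-pair}, so the bipartite spanner between them was built and guarantees the stretch.

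The main obstacle I expect is precisely this last bookkeeping step: ensuring that \emph{every} relevant $c$-neighbor pair of net-points is in fact covered by some bipartite spanner we built. This requires that the ``close pair'' relation used to decide which bipartite spanners to construct (distance $\le c\cdot\min\{w(P),w(Q)\}$) is broad enough to capture all $c$-neighbor net-point pairs arising in the semi-hierarchy — i.e.\ that if two low-stretch paths each contain an $i$-level net-point and those are $c$-neighbors, then the paths' weights are both $\Omega(2^i)$ (true, since a path-hierarchy only has $i$-level net-points when $2^i \lesssim w(P)$) and their mutual distance is $\le c\cdot\min\{w(P),w(Q)\}$ (which needs the weight lower bound to absorb the factor $c$). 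Getting these scale inequalities to line up, and simultaneously controlling the number of bipartite spanners touching any given path so the total weight stays $(s/\eps)^{O(1)} w(T)$, is the delicate part; the runtime claim then follows by summing the $O(cs\,|V|\log\Delta)$ costs of Lemmata~\ref{lem:decomp}, \ref{lem:replace}, and \ref{lem:greedy-pair} over the (linearly many, by property (i) of Lemma~\ref{lem:replace}) paths and pairs.
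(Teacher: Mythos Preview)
Your plan is the paper's plan: decompose via Lemma~\ref{lem:decomp}, replace via Lemma~\ref{lem:replace}, build pairwise bipartite spanners via Lemma~\ref{lem:greedy-pair}, and argue global stretch through the semi-hierarchy and Lemma~\ref{lem:greedy-stretch}. You also correctly locate the delicate point.

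There is one concrete gap in how you resolve it. You work with a single parameter $c=\Theta(1/\eps)$, but the paper needs \emph{two}: $c_1=\Theta(1/\eps)$ for the replacement step and a strictly larger $c_2=\Theta(\max\{c_1,s\}/\eps)$ for the bipartite spanners. The reason is that the union of path-nets is \emph{not} yet a semi-hierarchy for $S$. Chaining Lemma~\ref{lem:decomp}(ii) with Lemma~\ref{lem:replace}(iii)--(iv) only shows that every vertex is within distance $(16c_1+18)2^i$ of some $i$-level path-net point, and only for $i$ up to roughly $\log(\diam_T(T)/8s)$ (the factor $s$ enters because Lemma~\ref{lem:replace}(iii) loses $\diam_P/\diam_G\le s$). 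To get a genuine semi-hierarchy the paper \emph{promotes} every $i$-level path-net point to level $j=i+\log\max\{16c_1+18,8s\}$. After promotion, ``$c$-neighbors at level $j$'' correspond to pairs at distance $c\cdot\max\{16c_1+18,8s\}\cdot 2^i$ in the original path-nets, which is why the bipartite spanner must be built with the larger parameter $c_2$. Your single-$c$ scheme does not guarantee that every semi-hierarchy $c$-neighbor pair was actually covered by a bipartite spanner you built.

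A smaller point on weight: your ``$O(1)$ longer paths per scale plus geometric series'' sketch is not the argument that works here. The paper instead uses that the union of all replacement paths is $s'=O(c_1^2 s^2)$-sparse (Lemma~\ref{lem:replace}(vi) summed over the decomposition), and hence any path $P$ has at most $O(c_2 s')$ \emph{longer} paths within distance $c_2\,w(P)$; charging each bipartite spanner to its shorter path then gives $O(c_1^2 c_2^3 s^3/b_2)\cdot w(T)$ directly, with no per-scale bookkeeping.
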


\begin{proof}
The construction is straightforward: Decompose the spanning tree $T$
into paths using the {\em decomposition} procedure of Lemma \ref{lem:decomp},
and then replace each path with a set of low-stretch paths using
the {\em replacement} procedure of Lemma \ref{lem:replace}
with parameter 
$c_1 = 3 \cdot 2^7 / \eps$. 
For every pair of paths
in the new set, build the greedy bipartite spanner of Lemma \ref{lem:greedy-pair}
with parameters 
$c_2 = \max\{16c_1+18, 8s \} \cdot 2^6 / \eps$
and 
$b_2 = \eps/12$. 
The new graph is $G'$.
Recall that the bipartite spanner will only add edges if the path pair is 
sufficiently close.

\noindent{\bf (i)}
We first must prove sparsity: Tree
$T$ is $s$-sparse, and the sparsity condition is not affected 
by the decomposition. Take any path $P$ the decomposition, and
$P$ is replaced by a set of paths connecting the vertices of $P$
with sparsity $O(c_1^2s)$ (Lemma \ref{lem:replace}(vi)).
Hence, before the addition of the bipartite edges
$G'$ possessed sparsity $s' = O(c_1^2s^2)$. 

Returning to proof of weight,
it follows from Lemma \ref{lem:replace}(vi) that the weight of the
paths of $G'$ before the construction of the bipartite spanners
was $3s(3c_1(c_1+1)s+1) \cdot w(T)$.
A bipartite spanner with parameter $c_2$ is built for each pair of paths,
and we will charge the cost of the new edges to the shorter path.
By sparsity, any path $P$ is within distance $c_2 \cdot w(P)$ of 
$O(c_2s') = O(c_1^2 c_2 s^2)$ 
longer paths. Charge $P$ for the additional edges added by the 
bipartite spanner to each of these paths. By Lemma \ref{lem:greedy-pair} 
this yields a charge of 
$O(c_2^2 s/b_2) \cdot w(P)$ per spanner, for a total charge of 
$O(c_1^2 c_2^3 s^3 / b_2) \cdot w(P)$. It follows that
$w(G') = O(c_1^2 c_2^3 s^3 / b_2) \cdot w(T)$.

\noindent{\bf (ii)}
By Lemma \ref{lem:replace}, the intra-path stretch in all paths
produced by the replacement procedure is $1+\frac{32}{c_1}$.
By construction of the bipartite spanner, the stretch among all 
$i$-level path-net points that are $c_2$-neighbors is 
$\max \{ 1+b_2, 1+\frac{32}{c_1} \}$. 

We require a semi-hierarchy:
Below we show that each vertex $v$ in $G'$ is within distance
$(16c_1+18)2^i$ of some $i$-level path-net point for all levels
$i \le \lceil \diam_T(T)/8s \rceil$.
It follows that if we promote all $i$ level path-net points 
(for each $i$) to be $j$-level points for 
$j = i+ \log \max\{16c_1+18, 8s \}$, 
then each vertex $v$ is within distance at most $2^j$ of some 
$j$-level point, and this hold for all levels
$j \le \lceil \diam_T(T) \rceil = H$.
Hence we have a valid semi-hierarchy for all points.
Substituting $j$ into the stretch bound above, we have that 
the the stretch among all $j$-level semi-hierarchy net-points that are
$\frac{c_2}{\max\{16c_1+18, 8s \}}$-neighbors is
$\max \{ 1+b_2, 1+\frac{32}{c_1} \}$. 
Substituting in the values of $c_1,b_2$ chosen above, we have that
the the stretch among all $j$-level semi-hierarchy net-points that are
$(64/\eps)$-neighbors is
$1+\frac{\eps}{12}$.
Then by Lemma \ref{lem:greedy-stretch}, the resulting spanner has stretch 
$1 + \frac{32}{64/\eps} + 6\frac{\eps}{12}
= 1 + \eps$.

It remains only to verify the above claim,
that each vertex in $G'$ is within distance 
$(16c_1+18)2^i$
of some (not yet promoted) $i$-level path-net point:
After the decomposition procedure, 
for any value $0 < b \le \diam_T(T)$ 
each vertex $v \in V_G$ is within
distance $b$ of some vertex $w \in P_i$ for which $w(P_i) \ge b$
(Lemma \ref{lem:decomp}(ii)). 
After the replacement procedure, $w$ is within distance 
$16(c_1+1)b'$ of some path of length at least $b'$ where
$0 < b' \le \frac{\diam_P(P_i)}{4s}$. 
Setting $b' = b$, we have that $v$ is within distance 
$(16c_1+17)b$ of some path of length
at least $b$, where $0 < b \le \frac{\diam_T(T)}{4s}$. 
A path of this length possesses a path-net point at level 
$i = \lceil \log (b/2) \rceil$ or greater.
It follows that $v$ is within distance
$(16c_1+18)2^i$ of some $i$-level path-net point for all levels
$i \le \lceil \diam_T(T)/8s \rceil$.

\noindent{\bf Runtime.}
By Lemmata \ref{lem:decomp} and \ref{lem:replace}, 
the decomposition and replacement steps can be done in time 
$O(c_2 s |V_T| \log \Delta)$.
To compute each bipartite spanners, 
for each path $P$ in $G'$ we must 
have all paths of length $w(P)$ or greater within 
distance $c_2 \cdot w(P)$ of $P$. 
Recall from the proof of {\bf (i)} above
that the sparsity of $G'$ before the addition
of bipartite spanner edges is $s' = (s/\eps)^{O(1)}$.

After the initial decomposition step, assign a path-hierarchy to
each path, and compute all $i$-level $2c_2$-neighbors among all
paths and pairs of paths, for all $i$. This can all be done in time 
$(s/\eps)^{O(1)} n \log \Delta$. (The construction and analysis
is identical to that of Lemma \ref{lem:decomp}.) For each path
$e$ created by the replacement step, we compute its distance
to all $i$-level path-net points in the original edge within 
distance $c_2 2^i$.
Since the current graph is $s'$-sparse and has aspect ratio $\Delta$
this can be done in time 
$(s/\eps)^{O(1)} n \log \Delta$ 
over all paths,
if we precomputed for each vertex its closest $i$-level 
path-net point for all $i$.
For each $i$-level path-net point, we maintain a list of the
replacement path which discovered this point, sorted in order
of path length. Note that we may sort all replacement paths once,
and then compute close path-net point for each one in turn.

For each path $P$ and each $i \le \log(w(e))$, we inspect 
the $i$-level path-net points within distance $c_2 2^i$,
and all $i$-level $c_2$-neighbors of those path net points,
and their listed paths of length greater than $w(P)$, and
this suffices to discover for $P$ all longer edges within
distance $c_2 \cdot w(P)$. The runtime follows from the sparsity
and aspect ratio of the current graph.
\end{proof}

\section{A light spanner for general metric graphs}\label{sec:doubling}
In Section \ref{sec:sparse} we showed how to construct a light
spanner for spaces that have sparse spanning trees. In this section
we complete the proof of Theorem \ref{thm:main} by showing how
to decompose general metric graphs of low doubling dimension 
into graphs with sparse spanning
trees. Then the light spanners for the decomposed sparse spaces can all
be joined into a single light spanner for the original metric space.
Our approach uses a technique developed for computing
near-optimal traveling salesman tours in polynomial time
\cite{BGK-12,BG-13}, 
although our setting is much less restrictive, and so the problem 
of finding a good decomposition is simpler.

For some graph $G$ equipped with a hierarchy, define $F(u,i)$ for $i$-level
vertex $u$ to include all points of $B(u,i) \cap G$, as well as all their
hierarchical ancestors up to level $i$. The following preliminary lemma shows
that we can spin off a (slightly) dense area of the graph:

\begin{lemma}\label{lem:spinoff}
Let $G$ be a graph with aspect ratio $\Delta$, equipped with
a hierarchy. For some fixed $f = (\ddim/\eps)^{O(\ddim)}$, let
$i$ be the lowest level for which there exists some $u$ satisfying
$\MST^{NR}(F(u,i)) > f2^i$. Then $G$ may be segmented into intersecting
subgraphs $G',D \subset G$ with the following properties, for any value $c \ge 2$:
\begin{enumerate}
\item
Weight in $G'$: 
$w(\MST^{NR}(F(u,i) \cap G')) \le \frac{f}{4} 2^i$,
while $w(\MST^{NR}(G)) - w(\MST^{NR}(G')) \ge \frac{f}{8}2^i$.
\item
Sparsity of $D$: $\MST(D)$ is $c^{O(\ddim)} f$-sparse, while $D$ has diameter $O(c) \cdot 2^i$.
\item
Neighbor proximity:
For every $c$-neighbor point pair in the hierarchy of $G$,
the pair is found together in $A$ or $B$ or in both.
\end{enumerate}
\end{lemma}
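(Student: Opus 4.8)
The plan is to choose the level $i$ as in the statement — the lowest level at which some ball-plus-ancestors region $F(u,i)$ has $\MST^{NR}$ weight exceeding $f2^i$ — and then to peel off from $G$ a region $D$ that absorbs this excess weight while keeping a ``small'' region $G'$ around $u$. The natural candidate for $D$ is a thickened version of $F(u,i)$: take all points within some radius $\rho = \Theta(c)\cdot 2^i$ of $u$, together with their hierarchical ancestors up to level $i$, and let $G'$ be $G$ with (most of) the interior of this region removed, retaining a boundary layer so that the two subgraphs intersect and together cover all $c$-neighbor pairs. The radius $\rho$ must be chosen large enough that cutting $G$ at radius $\rho$ separates cleanly — i.e.\ no $c$-neighbor pair straddles the cut — which is why we keep an annular overlap of width $\Theta(c)\cdot 2^i$ around the boundary; this gives property (iii) essentially for free, since any $c$-neighbor pair at level $j$ has both endpoints within $c2^j \le c2^i$ of each other, hence on the same side of (or inside the overlap of) a cut whose overlap width exceeds $c2^i$.

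For property (ii), the key point is that $i$ was chosen \emph{minimal}: for every level $j < i$ and every vertex $v$, $w(\MST^{NR}(F(v,j))) \le f2^j$. Now take any ball $B(v,r)$ with center in $D$ and $r \le \diam(D)$. Using Lemma \ref{lem:weight} (and inequality \eqref{eq:weight}) to pass between $\MST^{NR}(G)$ restricted to a ball and the local $\MST$, together with the minimality bound at the appropriate level $j \approx \log r$, I would bound $w(\MST(D) \cap B^*(v,r))$ by $c^{O(\ddim)} f\cdot r$: a ball of radius $r$ is covered by $c^{O(\ddim)}$ sub-balls of radius $\approx 2^j$ (packing, Lemma \ref{lem:metric-pack}), each contributing $O(f2^j) = O(fr)$ by minimality of $i$, and the net-respecting slack terms $2^{O(\ddim)}r$ are absorbed into $c^{O(\ddim)}f\cdot r$ since $f = (\ddim/\eps)^{O(\ddim)}$ dominates. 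For radii $r > \diam(D)$ there is simply nothing to prove beyond $w(\MST(D)) \le c^{O(\ddim)}f \cdot \diam(D)$, which follows from the same covering argument applied at the top scale. The diameter bound $\diam(D) = O(c)\cdot 2^i$ is immediate from the construction of $D$ as a radius-$\Theta(c)2^i$ region.

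For property (i), the first inequality — $w(\MST^{NR}(F(u,i)\cap G')) \le \tfrac{f}{4}2^i$ — should follow by choosing the peeling radius $\rho$ so that $G'$ retains only a \emph{thin} shell of $F(u,i)$; since by minimality every strictly lower level is $f$-controlled, and the innermost part of $F(u,i)$ (the part with the bulk of the weight) is removed into $D$, what remains in $G'$ can be charged to lower-level $\MST^{NR}$ weights and bounded by $\tfrac{f}{4}2^i$ provided the constants in the shell width are tuned correctly. The second inequality — $w(\MST^{NR}(G)) - w(\MST^{NR}(G')) \ge \tfrac{f}{8}2^i$ — is the statement that we genuinely made progress: removing the interior of $D$ from $G$ destroys at least $\tfrac{f}{8}2^i$ worth of net-respecting $\MST$ weight, which holds because $F(u,i)$ had weight $> f2^i$ inside it, at least a constant fraction of which lies strictly inside the removed region and is not recoverable by rerouting through the retained shell (here one uses that $\MST^{NR}(G')$ on the retained points, even completed optimally, cannot cheaply reconnect across the excised interior).

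I expect the main obstacle to be property (i) — specifically, simultaneously guaranteeing that $G'$ is ``light enough'' near $u$ (first inequality) \emph{and} that we removed enough weight (second inequality), since these pull the peeling radius $\rho$ in opposite directions: a large $\rho$ removes lots of weight but may leave a heavy shell in $G'$, while a small $\rho$ keeps the shell light but may not capture the $>f2^i$ excess. Making both work will require an averaging / pigeonhole argument over a range of candidate radii $\rho \in [\Theta(2^i), \Theta(c)\cdot 2^i]$: the total $\MST^{NR}$ weight in this annulus is bounded (by the minimality of $i$ applied at each scale, summed), so some choice of $\rho$ in the range has only a small slice of weight crossing it, and we cut there. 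Reconciling the constants so that the ``small slice'' is below $\tfrac{f}{4}2^i$ yet the cumulative interior weight is above $\tfrac{f}{8}2^i$ is the delicate bookkeeping step; everything else (sparsity via minimality + packing, neighbor proximity via overlap width, diameter via construction) is routine.
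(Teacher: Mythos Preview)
Your treatment of (ii) and (iii) matches the paper: $D$ is the ball $B(u,(13+c)\cdot 2^i)$ with ancestors up to level $i$; sparsity follows by covering with the level-$(i-1)$ sets $F(v,i-1)$, each of weight $\le f2^{i-1}$ by minimality of $i$; and the overlap width handles $c$-neighbor pairs at levels below $i$ while $G'$ retains all net-points of level $\ge i$, which handles the higher levels.

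The gap is in (i), and it is not just bookkeeping. Your $G'$ is ``everything outside radius $\rho$ plus a thin shell,'' and you hope to tune $\rho$ by pigeonhole over $\Theta(c)$ candidate radii. The paper's $G'$ is structurally different: inside the ball it retains not only a shell but a \emph{coarse net} --- all $j$-level points for $j = i - a\log\ddim$ (with $a$ a constant to be fixed) --- together with all $\ge i$-level net-points. This skeleton is what makes both halves of (i) go through. For the first inequality, $F(u,i)\cap G'$ consists only of $\ddim^{O(a\ddim)}$ points, whose $\MST$ weighs $\ddim^{O(a\ddim)}2^i \ll \tfrac{f}{4}2^i$ once $f$ is chosen large enough; your shell-only version would have to bound the $\MST$ of a shell of width $\Theta(c)2^i$, which a priori costs $c^{O(\ddim)}f\,2^i$, not $\tfrac{f}{4}2^i$. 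For the second inequality, the paper explicitly builds a net-respecting spanning graph for $G'$ by reusing $\MST^{NR}(G)$ outside the annulus and patching inside; here the \emph{net-respecting} property is used essentially, not incidentally: any edge of $\MST^{NR}(G)$ crossing the annulus is either long enough that its endpoints are $\ge j$-level (hence retained in $G'$), or short enough that it must traverse the thin annulus and is therefore counted in the annulus weight. Your sentence ``is not recoverable by rerouting through the retained shell'' is exactly the step that needs this mechanism, and without the retained $j$-level net there is nothing for long crossing edges to land on.

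Relatedly, the pigeonhole is finer than you propose: the annuli have width $\Theta(2^j)$, not $\Theta(2^i)$, so there are $\Theta(\ddim^a)$ of them in $[12\cdot 2^i,13\cdot 2^i]$; one needs this many because the upper bound on the total weight in $B(u,13\cdot 2^i)$ is $2^{O(\ddim)}f\,2^i$, and the geometric-growth argument $(1+\tfrac14)^{\#\text{annuli}}$ must overrun $2^{O(\ddim)}$ to force the existence of a light annulus.
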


\begin{proof}
Subset $D$ includes all points in $B(u,(13+c) \cdot 2^i)$, 
along with all their net-points up to level $i$. 
Subset $G'$ includes all points of $G - B(u,13 \cdot 2^i)$, 
as well as all $i$-level (and higher level) net-points in 
$B(u,13 \cdot 2^i)$. Below, we will also add some more points to $A$. 

\noindent {\bf (ii)} 
Recall that by assumption each $(i-1)$-level point $v$ satisfies
$\MST^{NR}(F(v,i-1)) \le f2^{i-1}$. 
A spanning tree for any ball of radius $2^i \le r \le (13+c) \cdot 2^i$ 
in $D$ can be formed by covering that ball with the spanning trees of
$r^{O(\ddim)}$ sets $F(v,i-1)$, and then connecting the 
centers of these sets at an additional cost of $r^{O(\ddim)} 2^i$,
for a total weight of $r^{O(\ddim)}f 2^i$. By the optimally of 
$\MST(D)$, its weight inside the given ball cannot be greater than this.
A similar result holds for values of $r$ less than $2^i$, covering
with sets $F(v,k)$ for $k < i-1$.

\noindent {\bf (iii)} 
Clearly $c$-neighbor pairs of levels $i$ or higher are both found in
$G'$. For levels $k<i$, if one of the points is not found in $G'$, then by
construction it is found in $B(u,13 \cdot 2^i)$, and so the second point is found in 
$B(u,13 \cdot 2^i + c 2^k) \subset B(u,(13+c) \cdot 2^i) = D$.

\noindent {\bf (i)} 
We first discuss the weight of $\MST^{NR}(G)$ in the area around $u$:
A consequence of Equation \eqref{eq:weight} is that 
$B^*(u,12 \cdot 2^i)$ 
had intersected edges of $\MST^{NR}(G)$
weighing more than $\frac{f}{4} 2^i$.
At the same time, the larger ball
$B(u,13 \cdot 2^i)$ admitted 
a minimum spanning tree of weight 
$2^{O(\ddim)} f 2^i$
(by arguments identical to the proof of item {\bf (ii)} above). 
So by Lemma \ref{lem:weight}(i) 
$B(u,13 \cdot 2^i)$
intersected edges of $\MST^{NR}(G)$ of weight $2^{O(\ddim)} f 2^i$.

We now add some more points to $G'$:
Set $j=i-a \log \ddim$ for some absolute constant $a>2$ to be specified below.
Add to $G'$ all $j$-level points in $B(u,13 \cdot 2^i)$.
Further, let radius $r \in [12 \cdot 2^i + 72 \cdot 2^j, 13 \cdot 2^i - 72 \cdot 2^j]$ 
be such that
$w(\MST^{NR}(G) \cap A^*(u,r- 72 \cdot 2^j, r+ 72 \cdot 2^j)
\le \frac{1}{4} w(\MST^{NR}(G)) \cap B^*(u,r- 72 \cdot 2^j))$ --
that is, the weight of $\MST^{NR}(G)$ inside the annulus is a fraction
of that inside the hollow.
(Such a value $r$ must in fact exist: There are 
$\Theta(2^i/2^j) = \Theta(2^a \ddim)$ non-intersecting annuli.
If each contained edges of $\MST^{NR}(G)$ weighing a fraction at least 
$\frac{1}{4}$ of the edge-weight within their hollow, the total edge-weight
within $B(u,13 \cdot 2^i)$ would be 
$(1+\frac{1}{4})^{2^a \ddim} f 2^i$;
for a sufficiently large choice of $a$, this exceed the upper-bound proved above.)
Add to $G'$ all points of the smaller annulus 
$A(u,r- 24 \cdot 2^j, r+ 24 \cdot 2^j)$, 
as well as all points outside outer edge of the annulus --
in short, all points outside $B(u,r- 24 \cdot 2^j)$.

To prove the first part of item {\bf (i)}, note that 
$F(u,i) \cap G'$
contains only points of level $j$ or higher, and these can
connected by an MST of weight 
$(2^i/2^j)^{O(\ddim)} 2^i = \ddim^{O(2^a \ddim)} 2^i \ll \frac{f}{20} 2^i$.
The final inequality follows when $f$ is chosen to be sufficiently 
large with respect to $a$. Then by Lemma \ref{lem:respect}, 
a connecting net-respecting MST has weight less than $\frac{f}{4} 2^i$.

To prove the second part of item {\bf (i)}, we will construct a 
net-respecting spanning graph for $G'$ with smaller weight than 
$\MST^{NR}(G)$. Recall that $G$ and $G'$ differ only on points within
$B(u,r -24 \cdot 2^i)$.
We add to the spanning graph of $G'$
all edges of $\MST^{NR}(G)$ fully outside of $B(u,r + 24 \cdot 2^i)$
(and this is outside the smaller annulus 
$A(u,r- 24 \cdot 2^j, r+ 24 \cdot 2^j)$).
We also add to $G'$ those edges with only a single endpoint inside $B(u, r + 24 \cdot 2^i)$,
and these net-respecting edges must be incident on the annulus
$A(u,r- 24 \cdot 2^j, r+ 24 \cdot 2^j)$ if they are shorter than $48 \cdot 2^j$,
or on $j$-level points in $(u, r -24 \cdot 2^j)$ if they are longer.
We then claim that the points of $G'$ inside the ball 
$B(u, r +24 \cdot 2^j)$ 
can be covered by a graph whose weight is lighter than that of 
$\MST(G) \cap B^*(u,r+24 \cdot 2^i)$ by at least $\frac{f}{8} 2^i$,
from which the second part of the item follows.

First note that a graph connecting all $j$-level points
in $B(u,13 \cdot 2^i)$ has weight much less than $\frac{f}{80} 2^i$
for appropriately chosen $f$ (as above in the proof of the first part of {\bf(i)}), 
and so by Lemma \ref{lem:respect} a net-respecting MST connecting
only these points has weight much less than $\frac{f}{16} 2^i$.
Further, all points of the annulus
$A(u,r- 24 \cdot 2^j, r+ 24 \cdot 2^j)$ 
can be connected to some $j$-level point at total cost at most
$w(\MST^{NR}(G) \cap A^*(u,r- 72 \cdot 2^j, r+ 72 \cdot 2^j))$
(which was chosen above to be at most
$\frac{1}{4} w(\MST^{NR}(G) \cap B^*(u,r- 72 \cdot 2^j))$):
The annulus cuts $w(\MST^{NR}(G))$ into disjoint paths. For each
such path, if it is incident on a $j$-level (or higher level)
point then we are done. Otherwise, consider the tail of this path,
the final segment exiting the smaller annulus
$A(u,r- 24 \cdot 2^j, r+ 24 \cdot 2^j)$. Since $\MST^{NR}(G)$ is
net-respecting, and the path touches only net-points of level less
than $j$, the tail must be of length at least $24 \cdot 2^j$, since
it cannot jump immediately to a point outside the larger annulus.
Cut off this tail, and instead add to the path a much shorter tail
of length at most $2 \cdot 2^j$ connecting to the closest $j$-level point.
So the total cost of the spanning graph for 
$G' \cap B(u, r +24 \cdot 2^j)$
is at most
$\frac{f}{16} 2^i + \frac{1}{4} w(\MST^{NR}(G) \cap B^*(u,r- 72 \cdot 2^j))$,
while 
$\MST^{NR}(G) \cap B^*(u, r +24 \cdot 2^j)$
is of course at least 
$w(\MST^{NR}(G) \cap B^*(u,r- 72 \cdot 2^j))$,
a difference of 
$\frac{3}{4} w(\MST^{NR}(G) \cap B^*(u,r- 72 \cdot 2^j)) - \frac{f}{16} 2^i
\ge	\frac{3}{4} w(\MST^{NR}(G) \cap B^*(u,12 \cdot 2^i)) - \frac{f}{16} 2^i
\ge	\frac{3}{4} \cdot \frac{f}{4} 2^i - \frac{f}{16} 2^i
=	\frac{f}{8} 2^i
$.
\end{proof}

By using the technique of Lemma \ref{lem:spinoff} to repeatedly spin off
relatively sparse areas of the graph, we can decompose the entire graph into
relatively sparse areas, as in the following theorem:

\begin{theorem}\label{thm:doubling}
Any graph $G$ with aspect ratio $\Delta$ and equipped with a hierarchy, 
may be segmented into a set of subgraphs
$\cal{D}$ ($\cup_i D_i = G$) with the following properties, for any $c \ge 2$ and for
$f$ as above:
\begin{enumerate}
\item
Sparsity: For all $D_i \in \cal{D}$, $\MST(D_i)$ is $c^{O(\ddim)} f$-sparse.
\item
Weight: $\sum_i w(\MST(D_i)) = c^{O(\ddim)} w(\MST^{NR}(G))$
\item
Neighbor Proximity: 
For every $c$-neighbor point pair in the hierarchy of $G$, the pair
is found together in at least one set $D_i \in \cal{D}$.
\item
Point occurrence:
Each point of $G$ appears in at most $c^{O(\ddim)} \log \Delta$ different sets in $\cal{D}$.
\end{enumerate}
The segmentation can be computed in time 
$c^{O(\ddim)} n \log n (\log n + \log \Delta)$.
\end{theorem}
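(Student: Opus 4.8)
The plan is to apply Lemma~\ref{lem:spinoff} repeatedly, peeling off one relatively sparse, small-diameter subgraph $D$ at a time, and to argue that the process terminates quickly while accumulating only $c^{O(\ddim)} w(\MST^{NR}(G))$ total spanning-tree weight over all pieces. First I would set up the iteration: as long as the current residual graph $G'$ still contains some level $i$ and some center $u$ with $w(\MST^{NR}(F(u,i))) > f2^i$, pick the \emph{lowest} such level $i$ (breaking ties arbitrarily among valid centers $u$), invoke Lemma~\ref{lem:spinoff} with parameter $c$, place the resulting dense piece $D$ into $\cal{D}$, and recurse on $G'$. When no such $(u,i)$ remains, the final residual graph is itself $f$-sparse at every scale (in the $\MST^{NR}$, hence within a constant factor in the $\MST$, via Lemma~\ref{lem:weight}); add it to $\cal{D}$ as the last piece. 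Sparsity of every $D_i$ (item~1) is then immediate from Lemma~\ref{lem:spinoff}(ii) for the peeled pieces and from the termination condition together with Lemma~\ref{lem:weight} for the final residual piece; Neighbor Proximity (item~3) follows because Lemma~\ref{lem:spinoff}(iii) guarantees every $c$-neighbor pair survives together into $D$ or into $G'$, so tracing the pair down the iteration it must eventually land together in some peeled $D_i$ or in the final residual set.

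The main work is the weight bound (item~2). The key is the potential argument already prepared in Lemma~\ref{lem:spinoff}(i): each peeling step at level $i$ decreases $w(\MST^{NR}(\cdot))$ of the residual graph by at least $\tfrac{f}{8}2^i$, while it adds to $\cal{D}$ a piece $D$ of diameter $O(c)2^i$ and sparsity $c^{O(\ddim)}f$, hence (by $w(\MST(D)) \le |D|\diam(D)$ combined with the packing Lemma~\ref{lem:metric-pack} and the sparsity bound, or more directly by covering $D$ with $c^{O(\ddim)}$ balls each handled by Lemma~\ref{lem:spinoff}(ii)) of total spanning-tree weight $c^{O(\ddim)} f 2^i$. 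So each step charges $c^{O(\ddim)} f 2^i$ of new $\cal{D}$-weight against a drop of $\Theta(f2^i)$ in the residual $\MST^{NR}$ potential; since the potential starts at $w(\MST^{NR}(G))$ and never goes negative, summing a geometric-type series over all steps gives $\sum_i w(\MST(D_i)) = c^{O(\ddim)} w(\MST^{NR}(G))$, once we also account for the final residual piece whose weight is $O(w(\MST^{NR}(G)))$ by Lemma~\ref{lem:weight}. I expect this charging to be the crux, and the delicate point will be making sure the ``lowest level first'' rule prevents pathological re-processing of the same scale and keeps the accounting of overlaps between $D$ and $G'$ under control.

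For Point occurrence (item~4), I would observe that a fixed point $x$ can belong to a peeled $D$ only when the current peeling center $u$ lies within $O(c)2^i$ of $x$; since we always peel at the lowest available level, and since at each level the peeled balls $B(u,(13+c)2^i)$ that contain $x$ must have centers that are $i$-level net-points within $O(c)2^i$ of $x$, the packing property (Lemma~\ref{lem:metric-pack}, or the $c$-neighbor count $c^{O(\ddim)}$) bounds the number of such centers per level by $c^{O(\ddim)}$, and there are only $O(\log\Delta)$ levels, giving $c^{O(\ddim)}\log\Delta$ appearances; one must check that once $x$ has been removed from the residual graph (i.e.\ absorbed into $G'$ only as a high-level net-point, or dropped entirely) it is not re-added, which follows from the construction of $G'$ in Lemma~\ref{lem:spinoff}. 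Finally, for the runtime, I would note that there are at most $c^{O(\ddim)} n \log\Delta$ peeling steps in total (each point occurs $c^{O(\ddim)}\log\Delta$ times and each step removes a nonempty set of point-occurrences near its center), that detecting the lowest violating $(u,i)$ and computing the relevant $\MST^{NR}$ restrictions can be maintained incrementally using the hierarchy and $c$-neighbor lists with $c^{O(\ddim)}$ work per affected point per level, and that an $O(\log n)$ factor comes from the near-linear hierarchy/spanner and local-$\MST$ computations of \cite{KL-04,HM-06,CG-06}; multiplying these yields the stated $c^{O(\ddim)} n \log n(\log n + \log\Delta)$ bound.
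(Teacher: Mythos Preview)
Your proposal is correct and follows essentially the same approach as the paper: iterate Lemma~\ref{lem:spinoff} from the lowest violating level upward, derive sparsity and neighbor proximity directly from parts (ii)--(iii), obtain the weight bound by charging each peeled $D$ (of weight $c^{O(\ddim)}f2^i$) against the $\tfrac{f}{8}2^i$ drop in the $\MST^{NR}$ potential from part~(i), and obtain point occurrence from packing across $O(\log\Delta)$ levels. The one concrete ingredient the paper supplies that you leave vague is the runtime: since the annulus step inside Lemma~\ref{lem:spinoff} requires knowing $\MST^{NR}$ of the \emph{current} residual graph, the paper first builds $\MST^{NR}(G)$ on the net-respecting edges of the hierarchical spanner and then maintains it under the sequence of point deletions via the dynamic-MST structure of \cite{HLT-01}, which is where the extra $\log n$ factor (beyond the $\log\Delta$ from level iteration) actually comes from.
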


\begin{proof}
The technique of Lemma \ref{lem:spinoff} requires knownledge of $\MST^{NR}(G)$
(to compute the annulus), so we must first build this graph.
We construct the full hierarchical spanner with parameter $c$.
This can all be done in time $c^{O(\ddim)} n \log n$, and the resulting
graph has only $c^{O(\ddim)}n$ edges \cite{GR08a, GR08b}. We then extract
from the edges only those that are net-respecting, and build $\MST^{NR}(G)$
using this set, in total time $c^{O(\ddim)} n \log n$. 
We will need to maintain $\MST^{NR}(G)$ under a long series of point deletions, 
and this can be done in total time $c^{O(\ddim)} n \log^2 n$ 
(see Theorem 6 in \cite{HLT-01}).

Beginning at the lowest hierarchical level $L$ and iterating upwards, for each 
$i$-level net-point $u$ we compute $F(u,i)$ and construct $\MST^{NR}(F(u,i))$, 
applying the segmentation technique of Lemma \ref{lem:spinoff} if necessary. 
The subset $D$ is added to $\cal{D}$, and we iterate
the procedure on subset $G'$ until there are no more heavy neighborhoods, at which
point the final subset $G'$ is added to $\cal{D}$ as well.

The sparsity of sets in $\cal{D}$ and neighbor proximity follow immediately from
Lemma \ref{lem:spinoff}(ii) and (iii), respectively. 
The bound on point occurences follows from the packing
property and height of the hierarchy, and a corollary of this bound is that
$c^{O(\ddim)} n \log n \log \Delta$ time suffices to compute all graphs $\MST^{NR}(F(u,i))$.
For the weight, let $G_i$ be the graph in the
$i$-th iteration, and $w_i = w(\MST^{NR}(G_{i+1})) - w(\MST^{NR}(G_{i}))$.
Clearly $\sum_i w_i \le \MST^{NR}(G)$. By Lemma \ref{lem:spinoff}(i) and (ii),
$\sum_i w(\MST(D_i)) = \sum_i c^{O(\ddim)} w_i = c^{O(\ddim)} w(\MST^{NR}(G))$.
\end{proof}

We can now complete the proof of Theorem \ref{thm:main}:

\begin{proof}
Fix $c = 64/\eps$.
Given graph $G$, we build a full hierarchy for the entire space, and
add to the spanner $R$ the edges of the complete hierarchical 
$(1+\frac{\eps}{12})$-stretch spanner 
of Lemma \ref{lem:full-stretch} on all levels below $i = H - \log (n^2)$. 
Note that $2^H \le \diam(S)$, and so the longest edge added by
this construction is of length 
$O(1/\eps) \cdot 2^{H-\log(n^2)} =  O(\diam(S)/\eps n^2)$. So the entire cost
of all these edges is less than $O(1/\eps) \cdot \MST(G)$.\footnote{
This observation is due to Shay Solomon, and is similar to one made
by Arora in the context of Euclidean TSP \cite{A-98}.}
We then remove from consideration all net-points below level
$i = H - \log (n^2)$, set $L=i$, and focus on the remaining
$M = H-L = O(\log n)$ levels. Note that the aspect ratio of this set
is only $\Delta = 2^{O(M)} = n^{O(1)}$.

We apply Theorem \ref{thm:doubling} to decompose the remaining graph into all
sparse sets in time $(\ddim/\eps)^{O(\ddim)}n \log^2 n$, and construct 
the $(1+\frac{\eps}{12})$-stretch
spanner of Theorem \ref{thm:sparse} for each sparse subset. 
Since any point of $G$ appears in most $(1/\eps)^{O(\ddim)}\log n$
sets (Theorem \ref{thm:doubling}(iv)) the total time to construct all these
spanners is $(\ddim/\eps)^{O(\ddim)}n \log^2 n$.
Then the stretch between any pair of $c$-neighbors is $(1+\frac{\eps}{12})$,
and it follows from Lemma \ref{lem:greedy-stretch}
that the total graph stretch is $(1+\eps)$.

The weight guarantee follows from Theorem \ref{thm:doubling}(ii)
in conjunction with Theorem \ref{thm:sparse}(i).
\end{proof}

\paragraph{Open problems.}
Our results suggest several avenues for further research. 
Can $W_D$ be reduced to $\eps^{-O(\ddim)}$, to match
what is known for Euclidean space?
Also, does the simple greedy hierarchical spanner alone yield 
a light spanner?
One can also improve the runtime of our construction.

\bibliographystyle{alpha}
\bibliography{light}

\end{document}